\newtheorem{definition}{Definition}[section]
\newtheorem{theorem}{Theorem}[section]
\newtheorem{proposition}{Proposition}[section]
\newtheorem{assumption}{Assumption}[section]
\theoremstyle{definition}
\newtheorem{remark}{Remark}[section]
\newcommand{\yuyaf}[1]{\todo[inline,color=orange!50]{\textbf{For Yuya: }#1}}
\newcommand{\uslater}[1]{\todo[inline,color=pink!50]{\textbf{For us to think later: }#1}}
\newcommand{\indep}{\perp \!\!\! \perp}
\newcommand{\att}{{\theta_{\text{ATT}}}}
\newcommand{\naive}{{\theta^{\text{na\"ive}}}}
\newcommand{\attlu}{{\theta_{\text{ATT}}^{\text{LU}}}}
\newcommand{\attec}{{\theta_{\text{ATT}}^{\text{ECB}}}}
\newcommand{\attdid}{{\theta_{\text{ATT}}^{\text{DID}}}}
\newcommand{\attm}{{\theta_{\text{ATT}}^{\text{M}}}}
\newcommand{\attdidm}{{\theta_{\text{ATT}}^{\text{DIDM}}}}
\newcommand{\E}{{\text{E}}}
\begin{document}
\allowdisplaybreaks

\title{\LARGE %Matching Versus Difference in Differences
Matching $\leq$ Hybrid $\leq$ Difference in Differences\thanks{\setlength{\baselineskip}{4mm}First arXiv date: November 12, 2024. We would like to thank Raj Chetty for generously allowing us to use his data sets for our empirical application, and Jeff Smith and Petra Todd for their repeated guidance on replicating their papers. We benefitted from useful comments by  Raj Chetty, Jamie Fogel, Hidehiko Ichimura, and Matt Staiger. All remaining errors are ours.\smallskip}}
\author{
Yechan Park\thanks{\setlength{\baselineskip}{4mm}Opportunity Insights, Harvard University, 1280 Massachusetts Avenue, Cambridge, MA 02138 Email: \texttt{yechanpark@fas.harvard.edu}\smallskip}
\and
Yuya Sasaki\thanks{\setlength{\baselineskip}{4mm}Brian and Charlotte Grove Chair and Professor of Economics. Department of Economics, Vanderbilt University, VU Station B \#351819, 2301 Vanderbilt Place, Nashville, TN 37235-1819 Email: \texttt{yuya.sasaki@vanderbilt.edu}}
}
\date{}
%\listoftodos
\maketitle
\begin{abstract}\setlength{\baselineskip}{6mm}

% Since \citeauthor{lalonde1986evaluating}'s (\citeyear{lalonde1986evaluating}) seminal paper, there has been considerable interest in the economics literature in credibly estimating the short-term average treatment effect on the treated (ATT) using two time periods: pre- and post-intervention. Historically, scholars have used the experimental estimate as a baseline truth to explore whether Matching, Difference-in-Differences (DID), or their hybrid forms, could more accurately replicate this experimental benchmark \citep[e.g.,][]{heckman1998_2matching,dehejia2002propensity,smith2005does}.
% We revisit these methodologies through the lens of bracketing \textit{a la} \citet[][Section 5]{angrist2009mostly}, and establish the novel inequality relationship: Matching $\leq$ Hybrid $\leq$ DID under plausible and interpretable assumptions like negative selection.
% The true causal parameter is bracketed between the Matching and DID estimates, provided either set of assumptions holds.
% Hence, when non-negative treatments are expected, DID tends to give optimistic estimates while matching tends to give conservative estimates.
% Using four data sets that have been employed for the evaluation of job training and educational programs \citep{lalonde1986evaluating,heckman1998characterizing,smith2005does,athey2020combining}, we show that this inequality relationship is robust.
Since \citeauthor{lalonde1986evaluating}'s (\citeyear{lalonde1986evaluating}) seminal paper, there has been ongoing interest in estimating treatment effects using pre- and post-intervention data. Scholars have traditionally used experimental benchmarks to evaluate the accuracy of alternative econometric methods, including Matching, Difference-in-Differences (DID), and their hybrid forms \citep[e.g.,][]{heckman1998_2matching,dehejia2002propensity,smith2005does}.
We revisit these methodologies in the evaluation of job training and educational programs using four datasets \citep{lalonde1986evaluating,heckman1998characterizing,smith2005does,chetty2014measuring1,athey2020combining}, and show that the inequality relationship, Matching $\leq$ Hybrid $\leq$ DID, appears as a consistent norm, rather than a mere coincidence.
We provide a formal theoretical justification for this puzzling phenomenon under plausible conditions such as negative selection, by generalizing the classical bracketing \citep[][Section 5]{angrist2009mostly}. Consequently, when treatments are expected to be non-negative, DID tends to provide optimistic estimates, while Matching offers more conservative ones.
%We validate this inequality relationship across four datasets commonly used in the evaluation of job training and educational programs \citep{lalonde1986evaluating,heckman1998characterizing,smith2005does,athey2020combining,chetty2014measuring1}, demonstrating its robustness under varying conditions.
%\yechany{DIDM plays no role here?}
%In addition to the classical parametric bracketing (Angrist and Pischke 2009), we show that utilizing a hybrid estimator—DID following Matching on lagged outcomes—under reasonable assumptions, can further refine the bounds on the true causal parameter. We also provide an algebraic answer to the recent open debate on whether to condition on pretreatment outcome, that were primarily numerical based.
%- Extending these findings, we find that our bracketing results hold 

\medskip\noindent
{\bf Keywords:} bias, difference in differences, educational program, job training program, matching.

%\medskip\noindent
%{\bf JEL Codes:}
\end{abstract}

\newpage
%\listoftodos
% \section{bibouroku: for us to think later}
% \begin{itemize}
%     \item 
% \end{itemize}

\begin{comment}
\section{general outline}
\begin{itemize}
    \item - Main Part 1: Short term bracketing theory: 

    - Introduce the DID and the matching and the hybrid version

    - Show the bracketing relationship (ofc, have to say the ding and li(2019) paper here, but should be fine)
    
    - Show the  situations where the DIDM bias is greater than 0 or less than 0
    \item - Main Part2: Short term bracketing applicaton
    - revisit the Heckman et al type paper's application (JTPA, Lalonde's NSW)
    \item - Main Part3: Long term bracketing theory ( 
    \item - Main Part4: Long term bracketing application
    - in addition to the project star and the star, we should also have the california gain application as well? (
    
\end{itemize}
\end{comment}

%%%%%%%%%%%%%%%%%%%%%%%%%%%%%%%%%%%%%%%%%%%%%%%%%%%%
\section{Introduction}\label{sec:introduction}
%%%%%%%%%%%%%%%%%%%%%%%%%%%%%%%%%%%%%%%%%%%%%%%%%%%%

Since the seminal work by \citet{lalonde1986evaluating}, there has been substantial interest in accurately estimating the short-term average treatment effects on the treated (ATT) and other causal parameters using panel data with two (pre-treatment and post-treatment) periods \citep[e.g.,][]{heckman1998_2matching,dehejia2002propensity,smith2005does}. This body of literature has focused on a debate concerning which method -- matching (M), difference-in-differences (DID), or their hybrid (DIDM) -- is more effective at replicating experimental estimates when using observational data.

In the current era, where the DID has once again captured the attention of empirical practitioners, we contribute to the ongoing debate by examining the relationships among these three estimands from both empirical and theoretical perspectives. Our analysis clarifies the conditions under which one method provides the most conservative estimates while another offers the most optimistic estimates. By doing so, we provide a more nuanced understanding of the relative strengths and limitations of each approach, enabling researchers to make more informed methodological choices based on the specific characteristics of their data.

We begin by highlighting a puzzling pattern consistently observed in the empirical data used in both the aforementioned studies and other seminal works in the economics of education and labor economics. Specifically, the inequality relationship, M $\leq$ DIDM $\leq$ DID, frequently appears as a consistent norm, rather than a mere coincidence, when assessing the effectiveness of educational and job training programs.

For example, Figure \ref{fig:chabe_signed} illustrates the \textit{biases} of the M, DIDM, and DID estimates for job training programs relative to experimental estimates perceived as benchmark truths.
These estimates are excerpted from two seminal papers: \citet[HIST;][]{heckman1998characterizing} and \citet[ST;][]{smith2005does}, which revisits the analysis of \citet{lalonde1986evaluating,dehejia1999causal,dehejia2002propensity}. Despite differences in the job training programs studied, data sets, and estimation methods, the inequality relationship, M $\leq$ DIDM $\leq$ DID, holds robustly across both papers.
This relationship remains robust even when considering alternative estimation methods, different programs, or other data sets, as will be demonstrated shortly in this paper.
%%%%%%%%%%%%%%%%%%%%%%%%%%%%%%%%%%%%%%%%%%%%%%%%%%%%
\begin{figure}[t]
\centering
\includegraphics[width=0.66\textwidth]{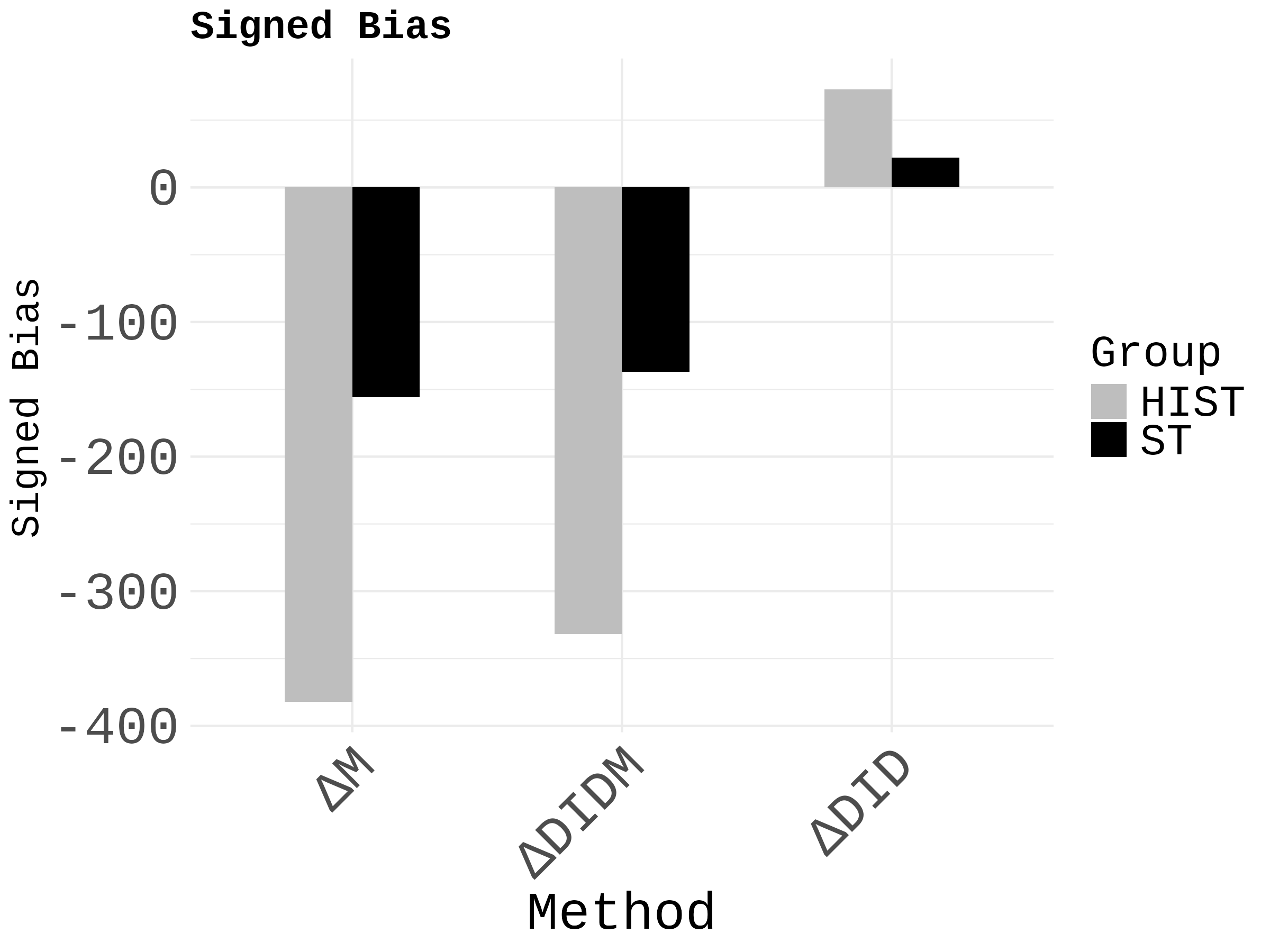}
\caption{The relative \textit{biases} of the M, DIDM, and DID estimates based on \citet{heckman1998characterizing} and \citet{smith2005does}.
The label HIST stands for Heckman-Ichimura-Smith-Todd while the label ST stands for Smith-Todd.
For each paper, the relative \textit{biases} of the observational estimates are displayed in percentage terms relative to an experimental estimate obtained using randomly allocated job training programs. 
Each estimation uses a set of auxiliary covariates including demographic characteristics. In addition, M and DIDM use pre-treatment earnings as a matching factor.}${}$
\label{fig:chabe_signed}
\end{figure}
%%%%%%%%%%%%%%%%%%%%%%%%%%%%%%%%%%%%%%%%%%%%%%%%%%%%

This pattern is reminiscent of the so-called `bracketing' relationship between the lagged dependent variable (LDV) estimator and the fixed-effect (FE) estimator in panel regressions, as presented in \citet[][Section 5]{angrist2009mostly}. Specifically, \cite{angrist2009mostly} outline plausible conditions under which the inequality relationship, LDV $\leq$ FE, holds on theoretical grounds.
Their assumptions necessitate negative selection into treatment and the absence of explosive outcomes, which make plausible sense in many labor economic settings.
This result has been elegantly extended to a nonparametric setup by \cite{ding2019bracketing}.

We find that similar plausible conditions, in the spirit of \cite{angrist2009mostly} and \cite{ding2019bracketing}, also give rise to the aforementioned inequality relationship, M $\leq$ DIDM $\leq$ DID.
In fact, a special case of our general double bracketing result, M $\leq$ DIDM $\leq$ DID, reduces to the conventional bracketing result, LDV $\leq$ FE, established by \citet{angrist2009mostly} and \citet{ding2019bracketing}.

Recall that M, DID, and DIDM identify the true causal parameter under the assumptions of observational unconfoundedness, parallel trends, and conditional parallel trends, respectively. In practice, an empirical researcher may not know which of these three alternative conditions is satisfied for an application of interest. Our double bracketing result, M $\leq$ DIDM $\leq$ DID, implies that the true causal parameter is bracketed below by M and above by DID, with DIDM between them, when one of the three alternative assumptions holds true.
In other words, our theoretical prediction implies that the DID approach tends to yield the most optimistic estimates while the M approach tends to yield the most conservative estimates. 

While our discussions primarily focus on the classic two-period framework \citep[e.g.,][]{heckman1998characterizing,smith2005does}, we also extend our double bracketing result to a broader class that includes cases with multi-dimensional covariates, dynamic and multi-period settings, and impulse response functions in event studies, as explored in the recent literature \citep[e.g.,][ among others]{callaway2018difference,acemoglu2019democracy,deChaisemartin2020two,dube2023local,imai2023matching}.

Using four data sets that have been used for the evaluation of job training and educational programs in the literature \citep{lalonde1986evaluating,heckman1998characterizing,smith2005does,athey2020combining}, we empirically examine our double bracketing hypothesis, M $\leq$ DIDM $\leq$ DID.
In light of the robustness of this inequality relationship in all the empirical scenarios, we provide formal theoretical explanations for this intriguing phenomenon.
Our assumptions required for the double bracketing relationship, motivated by \cite{angrist2009mostly} and \cite{ding2019bracketing} as mentioned earlier, is not only in line with the literature but are also empirically testable.
Hence, we examine our assumptions, as well as the double bracketing relations \textit{per se}, using these empirical data sets.
It turns out that our assumptions, as well as the double bracketing relationship, M $\leq$ DIDM $\leq$ DID, indeed hold robustly in all these empirical cases.

\section{Relation to the Literature}\label{sec:literature}
%%%%%%%%%%%%%%%%%%%%%%%%%%%%%%%%%%%%%%%%%%%%%%%%%%%%
The question we investigate relates to a long literature of econometrics including a number of seminal papers.

First, this paper closely relates to the classical debate in economics on what type of non-experimental estimates replicate the experimental estimates of \cite{lalonde1986evaluating}. 
The pioneering papers in that literature are \citet{heckman1998characterizing}, \citet{heckman1998_2matching}, \citet{dehejia2002propensity}, and \citet{smith2005does}.
There, the main purpose was to find the single best estimator, often measured by the \textit{absolute bias}. 
We are interested in \textit{signs}, as well as the magnitudes, of bias to identify which estimator is the most conservative/optimistic.

More recently, one notable work revisited a related problem. Specifically, Figure \ref{fig:chabe} replicates \citet[][Figure 3]{chabe2017should} and presents the \textit{absolute biases} of the three estimands, M, DIDM, and DID, based on the estimates from \citet{heckman1998characterizing} and \citet{smith2005does}. 
%%%%%%%%%%%%%%%%%%%%%%%%%%%%%%%%%%%%%%%%%%%%%%%%%%%%
\begin{figure}[tb]
\centering
\includegraphics[width=0.66\textwidth]{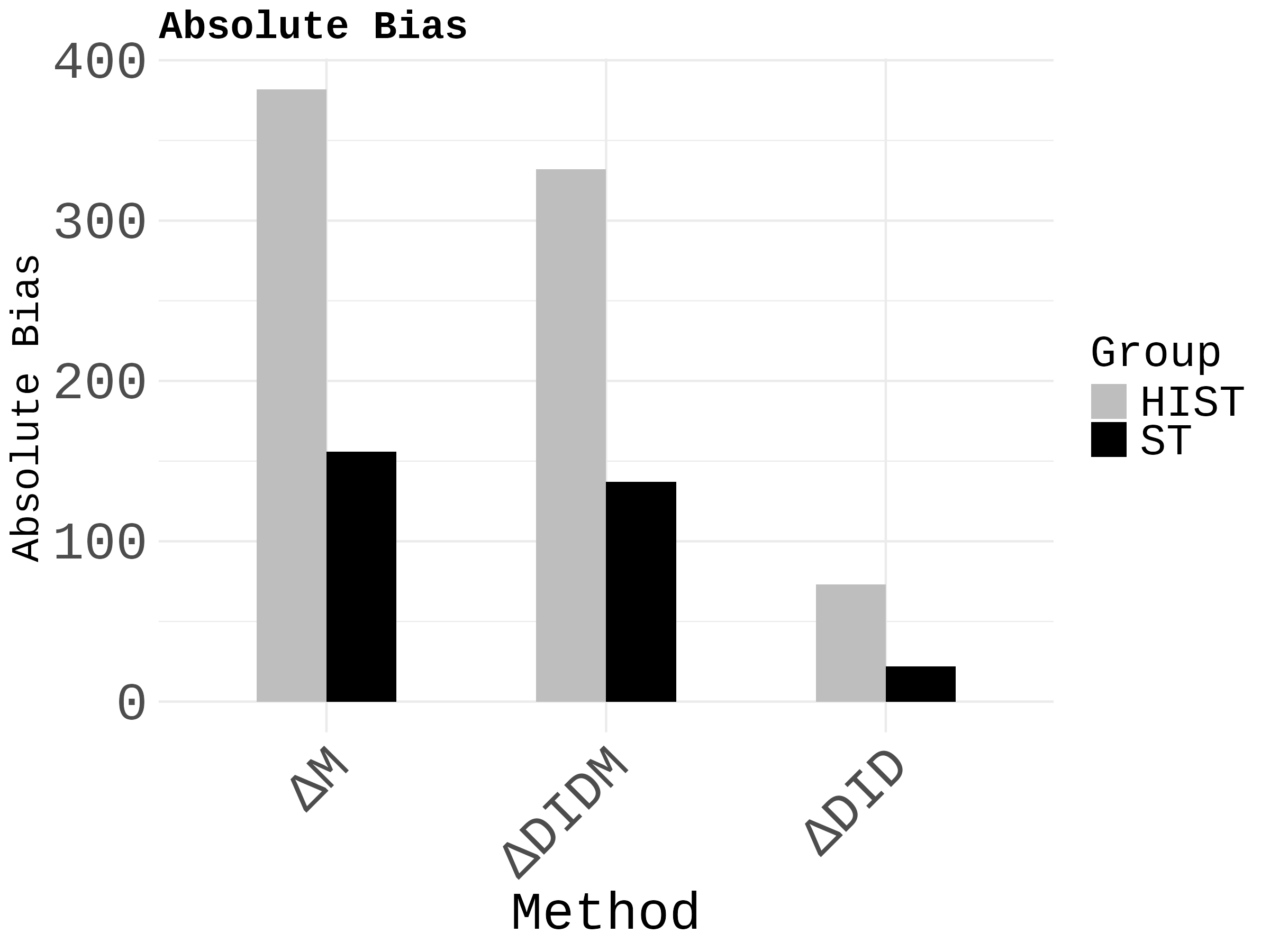}
\caption{Replication of Figure 3 in \cite{chabe2017should} based on the estimates by \citet{heckman1998characterizing} and \citet{smith2005does}.
The label HIST stands for Heckman-Ichimura-Smith-Todd while the label ST stands for Smith-Todd.
For each paper, the absolute \textit{biases} of M, DIDM, and DID estimates are displayed in percentage terms relative to an experimental estimate obtained using randomly allocated job training programs. 
Each estimate uses a set of auxiliary covariates including demographic characteristics.
In addition, M and DIDM use pre-treatment earnings as a matching factor.}${}$
\label{fig:chabe}
\end{figure}
%%%%%%%%%%%%%%%%%%%%%%%%%%%%%%%%%%%%%%%%%%%%%%%%%%%%
With the symbol `$\Delta$' denoting the \textit{bias}, this figure essentially shows $|\Delta $M$|\geq|\Delta $DIDM$|\geq|\Delta $DID$|$, implying that DID achieves the smallest bias for these empirical cases.
Notably, this Figure \ref{fig:chabe} depicts the \textit{absolute biases} corresponding to the \textit{signed biases} shown in Figure \ref{fig:chabe_signed} in our introductory section.

Thus, while the existing literature primarily focuses on identifying the best estimator based on absolute bias, our paper distinguishes itself by examining the relative signs of biases. This signed characterization provides more nuanced insights; for instance, Figure \ref{fig:chabe_signed} illustrates that M underestimates the true causal effects, whereas DID overestimates them.
Beyond this primary distinction between \textit{absolute} and \textit{signed} biases, our paper differs from \citet{chabe2017should} in two other key aspects. First, we provide a formal theoretical result, whereas the existing work derives its conclusions from numerical studies. Second, the existing work relies on functional form and structural assumptions -- such as separability and distinctions between transitory and permanent income -- tailored to job training programs. In contrast, we adopt a more model-free, nonparametric approach.

Our theoretical results build on the existing literature on bracketing. For linear panel models, \citet[][Section 5]{angrist2009mostly} develop a bracketing relationship, showing that LDV $\leq$ FE. This result has been extended to a nonparametric context by \citet{ding2019bracketing}.
We contribute to this literature in two key ways. First, we consider a more general framework that encompasses the symmetric DID (DIDM) studied by \citet{heckman1998characterizing} and \citet{smith2005does} in particular. Second, within this general framework, we show that DIDM, which \citet{heckman1998characterizing} recommend for minimizing bias, is further bracketed between M and DID, resulting in our double bracketing outcome.
A special case of our general framework reduces M and DIDM to LDV, and thus, our double bracketing relationship, M $\leq$ DIDM $\leq$ DID, simplifies to the conventional bracketing relationship, LDV $\leq$ FE.

Finally, this paper is of course related to the burgeoning literature on event studies and DID today -- see recent surveys by \citet{deChaisemartin2023two} and \citet{roth2023review} among others.
This paper is also related to the huge literature on matching, including covariate matching, propensity score matching, (augmented) inverse probability weighting, nearest neighbor methods, and regression adjustments, among others.
In particular, the matching literature has emphasized the importance of conditioning on lagged outcomes as matching factors to improve the precision of causal inference \citep{lalonde1986evaluating,dehejia1999causal,dehejia2002propensity}.
Economically, matching on lagged outcomes helps to address the so-called
``\citet{ashenfelter1978estimating} dip.''
The empirical work by \citet{acemoglu2019democracy} confirms that conditioning on lagged outcomes indeed yields more credible estimates in event studies.
Effectively, they advocate the M over the DID in our language.
Later, \citet[][Section 4.1]{dube2023local} generalize \citet{acemoglu2019democracy} to a hybrid framework which we refer to as the DIDM in this paper.
See the DID${}_\text{M}$ estimator of \citet{deChaisemartin2020two} and the (panel) matching estimator of \citet{imai2023matching}, as well as \citet[][Section 4.1]{dube2023local} -- they all propose and study the properties of what we refer to as the DIDM.
We contribute to the above literature by shedding light on the systematic relationship among M, DIDM, and DID, and empirically and theoretically examining this relationship.

%%%%%%%%%%%%%%%%%%%%%%%%%%%%%%%%%%%%%%%%%%%%%%%%%%%%
\section{The Setup and Definitions}
%%%%%%%%%%%%%%%%%%%%%%%%%%%%%%%%%%%%%%%%%%%%%%%%%%%%

Let $W$ denote the group of treatment assignment such that units with $W=1$ receive treatment between periods $t=0$ and $t=1$, while those with $W=0$ remain untreated.
Thus, everyone is untreated for $t \leq 0$, and only those with $W=1$ are treated for $t \geq 1$.
Let $Y_t(d)$ denote the potential outcome under treatment status $d$ at time $t$.
With these notations, suppose that a researcher is interested in identifying the average treatment effect on the treated (ATT) at time $t=1$ defined by
\begin{align*}
\att = E[ Y_1(1) - Y_1(0) | W=1].
\end{align*}

%%%%%%%%%%%%%%%%%%%%%%%%%%%%%%%%%%%%%%%%%%%%%%%%%%%%
\subsection{The M, DID, and DIDM Estimands}
%%%%%%%%%%%%%%%%%%%%%%%%%%%%%%%%%%%%%%%%%%%%%%%%%%%%

Letting $Y_t$ denote the observed outcome at time $t$, the seminal paper by \citet{heckman1998characterizing} proposes three alternative estimands to this goal:
\begin{align*}
\attm &= E[ Y_1 | W=1] - E[ E[Y_1 |W=0, Y_{-s} ] | W=1] \quad (s \geq 0),
\\
\attdid &= E[ Y_1 - Y_0 | W= 1] - E[ Y_1 - Y_0 | W=0],
\quad\text{and}\\
\attdidm &= E[E[ Y_1 -Y_0 | Y_{-s}, W=1] - E[ Y_1 -Y_0 | Y_{-s}, W=0]|W=1] \qquad (s \geq 0), 
\end{align*}
which we call the matching (M), the difference-in-differences (DID), and the difference-in-differences matching (DIDM), respectively.
Here, we focus on the past outcome $Y_{-s}=Y_{-s}(0)$ as the key matching criterion following \citet{heckman1998characterizing} and \citet{smith2005does}, but we present an extension to general vector-valued matching criteria $X$ in Section \ref{sec:general} to include other auxiliary covariates as well as more lagged outcomes as in the local projection approaches to event studies \citep[e.g.,][]{acemoglu2019democracy}.

The M estimand identifies the true ATT (i.e, $\attm = \att$ holds) if the matching condition
\begin{align*}
\text{Condition M:} \ \ \ (Y_1(1), Y_1(0) ) \indep W | Y_{-s}
\end{align*}
is satisfied.
The DID estimand identifies the true ATT (i.e, $\attdid = \att$ holds) if the parallel trend condition
\begin{align*}
\text{Condition DID:} \ \ \
\E[Y_1(0)-Y_0(0)|W=0]
=
\E[Y_1(0)-Y_0(0)|W=1]
\end{align*}
is satisfied.
Finally, the DIDM estimand identifies the true ATT (i.e, $\attdidm = \att$ holds) if the conditional parallel trend condition
\begin{align*}
\text{Condition DIDM:} \ \ \
\E[Y_1(0)-Y_0(0)|Y_{-s},W=0]
=
\E[Y_1(0)-Y_0(0)|Y_{-s},W=1]
\end{align*}
is satisfied.

As \citet{heckman1998characterizing} and \citet{smith2005does} stressed, 
in the absence of knowledge about the true data-generating process, 
there is a risk of bias associated with the three estimands, $\attm$, $\attdid$, and $\attdidm$.
For instance, when the true DGP satisfies Condition M but does not satisfy Condition DID or Condition DIDM, then only $\attm$ is guaranteed the identification while estimators for $\attdid$ and $\attdidm$ are doomed to be biased in general.
Hence, understanding the systematic relationship among 
$\attm$
$\attm$, and
$\attdid$
can help researchers form insights into the possible range within which the true causal effect, 
$\att$, may lie when one of the three alternative conditions holds.
We will empirically confirm the systematic double bracketing relationship, $\attm \leq \attdidm \leq \attdid$, in Section \ref{sec:empirical_double_bracketing} and provide theoretical explanations for it in Sections \ref{sec:simple}--\ref{sec:general}.

\subsection{Mutual Non-Nestedness of The M, DID, and DIDM Conditions}\label{sec:non_nested}
{\color{black}
The three identifying assumptions underlying the estimands \(\attm\), \(\attdid\), and \(\attdidm\) each restrict the data‐generating process (DGP) in a distinct way. No one assumption nests another; that is, none of the conditions implies or is implied by any of the others. In the following, we provide intuitive counterexamples—some leveraging differences in sample composition—to illustrate these distinctions.

\subsubsection{Condition M Holds but Condition DIDM Fails}

Consider the following data-generating process (DGP) for the potential outcomes:
\begin{align*}
Y_0(0) &= \mu(W)+\eta_0(0), & Y_1(0)&=\eta_1(0),\\
Y_0(1) &= \mu(W)+\eta_0(1), & Y_1(1)&=\eta_1(1),
\end{align*}
where $(\eta_0(0),\eta_0(1),\eta_1(0),\eta_1(1)) \indep (W, Y_{-s})$ and $\mu(\cdot)$ is an arbitrary function such that $\mu(0) \neq \mu(1)$.
Under this DGP, $(Y_1(1),Y_1(0)) \indep W | Y_{-s}$ holds, but $\E[Y_1(0)-Y_0(0)|Y_{-s},W=0]=\E[\eta_1(0)-\eta_0(0)]-\mu(0) \neq \E[\eta_1(0)-\eta_0(0)]-\mu(1) = \E[Y_1(0)-Y_0(0)|Y_{-s},W=1]$.
In other words, Condition M holds, but Condition DIDM fails.

\subsubsection{Condition DIDM Holds but Condition M Fails}

Conversely, consider the following DGP for the potential outcomes:
\begin{align*}
Y_0(0) &= \mu(W)+\eta_0(0), & Y_1(0)&= \mu(W) + \eta_1(0),\\
Y_0(1) &= \mu(W)+\eta_0(1), & Y_1(1)&= \mu(W) + \eta_1(1),
\end{align*}
where $(\eta_0(0),\eta_0(1),\eta_1(0),\eta_1(1)) \indep (W, Y_{-s})$ and $\mu(\cdot)$ is an arbitrary function such that $\mu(0) \neq \mu(1)$.
Under this DGP, $\E[Y_1(w)|Y_{-s},W=0]=\mu(0)+\E[\eta_1(0)] \neq \mu(1)+\E[\eta_1(0)] = \E[Y_1(w)|Y_{-s},W=1] $, and hence, $(Y_1(1),Y_1(0)) \not\indep W | Y_{-s}$, but $\E[Y_1(0)-Y_0(0)|Y_{-s},W=0]=\E[\eta_1(0)-\eta_0(0)] = \E[Y_1(0)-Y_0(0)|Y_{-s},W=1]$ holds.
In other words, Condition M fails, but Condition DIDM holds.

\begin{comment}
Conversely, suppose that the untreated outcome changes are homogeneous across treatment groups when conditioning on \(Y_{-s}\), so that
\[
E[Y_1(0)-Y_0(0)\mid Y_{-s},W=1] = E[Y_1(0)-Y_0(0)\mid Y_{-s},W=0] = \delta(Y_{-s}),
\]
for some function \(\delta(\cdot)\). However, let the level of the untreated outcome at time 1 differ across groups even after conditioning on \(Y_{-s}\). For instance, consider:
\[
\begin{aligned}
Y_1(0) &= \mu(Y_{-s}) + \zeta(W) + \epsilon_1, \\
Y_0(0) &= \mu(Y_{-s}) + \epsilon_0,
\end{aligned}
\]
with \(E[\epsilon_1-\epsilon_0\mid Y_{-s},W]=\delta(Y_{-s})\) and \(\zeta(1)\neq \zeta(0)\). In this case,
\[
E[Y_1(0)\mid Y_{-s},W=1] = \mu(Y_{-s}) + \zeta(1) \quad \text{and} \quad E[Y_1(0)\mid Y_{-s},W=0] = \mu(Y_{-s}) + \zeta(0),
\]
so that the matching condition
\[
(Y_1(1),Y_1(0)) \indep W\mid Y_{-s}
\]
fails due to the presence of the treatment-dependent term \(\zeta(W)\). Nonetheless, because \(\zeta(W)\) is constant over time, the difference
\[
E[Y_1(0)-Y_0(0)\mid Y_{-s},W] = \delta(Y_{-s})
\]
remains the same across groups. Hence, Condition DIDM holds even though Condition M is violated.
\end{comment}
\subsubsection{Condition DIDM Holds but Condition DID Fails}

Consider the following DGP for untreated potential outcomes:
\begin{align*}
Y_0(0) &= \mu(Y_{-1}) + \eta_0(0), & Y_1(0) &= \mu(Y_{-1}) + \eta_1(0),
\end{align*}
where $E[\eta_1(0)-\eta_0(0)|Y_{-1},W]=Y_{-1}$ and $E[Y_{-1}|W=0] \neq E[Y_{-1}|W=1]$.
Under this DGP, we have 
$E[Y_1(0)-Y_0(0)|Y_{-1},W=0]=Y_{-1}=E[Y_1(0)-Y_0(0)|Y_{-1},W=1]$, but
$
E[Y_1(0)-Y_0(0)|W=0] = E[E[\eta_1(0)-\eta_0(0)|Y_{-1},W=0]|W=0] = E[Y_{-1}|W=0]
\neq
E[Y_{-1}|W=1] = E[E[\eta_1(0)-\eta_0(0)|Y_{-1},W=1]|W=1] = E[Y_1(0)-Y_0(0)|W=1].
$
In other words, Condition DIDM holds, but Condition DID fails.   Intuitively, even if the parallel trends hold at every level of \(Y_{-1}\), differences in the distribution of \(Y_{-1}\) between treatment groups can lead to unequal unconditional trends.

\subsubsection{Condition DID Holds but Condition DIDM Fails}

Conversely, consider the following DGP for untreated potential outcomes:
\begin{align*}
Y_0(0) &= \mu(Y_{-1}) + \eta_0(0), & Y_1(0) &= \mu(Y_{-1}) + \eta_1(0),
\end{align*}
where $E[\eta_1(0)-\eta_0(0)|Y_{-1},W] = (2W-1)Y_{-1}$ and $E[Y_{-1}|W=0]+E[Y_{-1}|W=1]=0$.
Under this DGP,
$
E[Y_1(0)-Y_0(0)|W=0] = E[E[Y_1(0)-Y_0(0)|Y_{-1},W=0]|W=0] = E[-Y_{-1}|W=0]
=
E[Y_{-1}|W=1] = E[E[Y_1(0)-Y_0(0)|Y_{-1},W=1]|W=1] = E[Y_1(0)-Y_0(0)|W=1],
$
but
$E[Y_1(0)-Y_0(0)|Y_{-1},W=0] = -Y_{-1} \neq Y_{-1} = E[Y_1(0)-Y_0(0)|Y_{-1},W=1].$
In other words, Condition DID holds but Condition DIDM fails. Intuitively, even if the overall (unconditional) trends are equal (due to cancellation when aggregating over \(Y_{-1}\)), the trends might differ for each subpopulation defined by \(Y_{-1}\).

\begin{comment}
Conversely, consider a setting where untreated outcome trends differ by gender in opposite directions between treated and control groups. For example, suppose that for males
\[
E[Y_1(0)-Y_0(0)\mid \text{male},W=1] = \delta_{\text{male}}^{(1)} \quad \text{and} \quad E[Y_1(0)-Y_0(0)\mid \text{male},W=0] = \delta_{\text{male}}^{(0)},
\]
with \(\delta_{\text{male}}^{(1)} < \delta_{\text{male}}^{(0)}\), and for females
\[
E[Y_1(0)-Y_0(0)\mid \text{female},W=1] = \delta_{\text{female}}^{(1)} \quad \text{and} \quad E[Y_1(0)-Y_0(0)\mid \text{female},W=0] = \delta_{\text{female}}^{(0)},
\]
with \(\delta_{\text{female}}^{(1)} > \delta_{\text{female}}^{(0)}\). Now, if the proportions of males and females in the treated and control groups are such that the weighted averages of these gender-specific trends coincide, then one obtains
\[
\E[Y_1(0)-Y_0(0)\mid W=1] = \E[Y_1(0)-Y_0(0)\mid W=0],
\]
so that the unconditional parallel trends (DID) hold by cancellation. However, within each gender the trends differ between treated and control groups, meaning the conditional parallel trends assumption is violated:
\[
E[Y_1(0)-Y_0(0)\mid \text{gender},W=1] \neq E[Y_1(0)-Y_0(0)\mid \text{gender},W=0].
\]
Thus, while the DID estimand may fortuitously recover the ATT at the aggregate level, the DIDM estimand—which requires valid conditional comparisons—would be biased.
\end{comment}
\bigskip

\noindent\textbf{Implication for Practice.}\\
These examples underscore that each assumption addresses a different facet of the DGP. Practitioners must \textit{ex ante} commit to one of these assumptions based on the context and the nature of available data. Whether one relies on matching (Condition M), unconditional parallel trends (DID), or conditional parallel trends (DIDM) is not a matter of nested robustness but of fundamentally different identifying restrictions, each carrying its own trade-offs and implications for causal inference.

}

%%%%%%%%%%%%%%%%%%%%%%%%%%%%%%%%%%%%%%%%%%%%%%%%%%%%
\subsection{The Lagged Outcome $Y_{-s}$ as A Key Matching Criterion}\label{sec:lagged_outcome}
%%%%%%%%%%%%%%%%%%%%%%%%%%%%%%%%%%%%%%%%%%%%%%%%%%%%
Following \citet{heckman1998characterizing} and \citet{smith2005does}, we place particular emphasis on lagged outcomes, like $Y_{-s}$, as the conditioning variable for M and DIDM.
This focus is motivated by several key studies in the literature.

Firstly, \citet{ashenfelter1978estimating} argues that participants in job training programs typically have permanently lower earnings than non-participants, but also experience a temporary decrease in earnings just before entering the program, a phenomenon now famously known as the ``Ashenfelter dip.'' 
Controlling for lagged outcomes is crucial to addressing this confounding factor, as also emphasized by \citet{heckman1999economics} among others.

In \citet{lalonde1986evaluating} and \citet{dehejia1999causal,dehejia2002propensity}, indeed, including lagged outcomes in matching allowed for close replication of the baseline experimental estimates.
The subsequent discourse between \citet{smith2005does} and \citet{dehejia2005practical} further underscored the significance of conditioning on lagged outcomes.

\citet[][Section 5]{angrist2009mostly} discuss the importance of conditioning on lagged outcomes in the context of lagged-dependent-variable (LDV) models.
More recently, \citet{roth2023parallel}, in their review of difference-in-differences methods, raise an open question about the validity of conditioning on lagged outcomes and whether it reduces or exacerbates bias.

Finally, influential empirical work has emphasized the special role that lagged outcomes play in reducing bias in policy evaluation. For example, \citet{chetty2014measuring1}, in their work on value-added models in education, highlight the utility of lagged outcomes, specifically prior test scores, as essential covariates for obtaining unbiased value-added estimates.

These studies suggest the importance of focusing on lagged outcomes as a key conditioning variable and analyzing the sources of bias that may arise under different conditions.
We thus follow this literature to focus on the lagged outcome $Y_{-s}$ for matching in the current section.
With this said, Section \ref{sec:general} will generalize this setup to include a general class of matching criteria in addition to lagged outcomes.

{\color{black}
Finally, we briefly discuss the lag order, denoted as $-s$. 
In the special case where $s=0$, the DIDM estimand $\attdidm$ simplifies to the M estimand $\attm$. 
As a result, the double bracketing relation $\attm \leq \attdidm \leq \attdid$ that we present later reduces to $\attm = \attdidm \leq \attdid$, which corresponds to the existing bracketing result, LDV $\leq$ FE, as demonstrated by \cite[][Section 5]{angrist2009mostly} and \cite{ding2019bracketing}.
Our framework, which accommodates $s \geq 0$, is more general and particularly includes the symmetric DID approach introduced by \citet{heckman1998characterizing} and \citet{smith2005does}. 
This approach, which aligns with our DIDM with $-s=-1$, was shown by \citet{heckman1998characterizing} and subsequent studies to be more accurate than other specifications. 
Therefore, our arbitrary lag order $-s$ not only encompasses the classical bracketing scenario as a special case but also includes many important cases explored in foundational papers within the literature.
}

%%%%%%%%%%%%%%%%%%%%%%%%%%%%%%%%%%%%%%%%%%%%%%%%%%%%
\section{Empirics of the Double Bracketing}\label{sec:empirical_double_bracketing}
%%%%%%%%%%%%%%%%%%%%%%%%%%%%%%%%%%%%%%%%%%%%%%%%%%%%

This section revisits three empirical studies using four datasets from the literature, parts of which are derived from the seminal papers discussed in Section \ref{sec:introduction}. We are going to document the robustness of the double bracketing relationship, $\attm \leq \attdidm \leq \attdid$, across each of these scenarios, different data sets, estimation methods, and subpopulations defined by observed characteristics. The first two examples, presented in Section \ref{sec:job_training}, focus on job training programs, namely the NSW and JTPA. The final example, in Section \ref{sec:educ}, examines educational programs.

%%%%%%%%%%%%%%%%%%%%%%%%%%%%%%%%%%%%%%%
\subsection{Job Training Programs}\label{sec:job_training}
%%%%%%%%%%%%%%%%%%%%%%%%%%%%%%%%%%%%%%%
\subsubsection{The NSW Program}\label{sec:nsw}
%%%%%%%%%%%%%%%%%%%%%%%%%%%%%%%%%%%%%%%

We begin our analysis with a reexamination of the National Supported Work (NSW) programs \citep{lalonde1986evaluating,dehejia1999causal,dehejia2002propensity,smith2005does} through the lens of our double bracketing. We employ the same sets of the CPS and PSID data as those utilized by \citet{lalonde1986evaluating} and \citet{smith2005does}. Although these data sets have been extensively analyzed in the literature, we offer a brief description in Appendix \ref{sec:appendix:nsw} for the sake of completeness and convenience of the readers.

The primary outcome variable, $Y_t$, represents participants' self-reported earnings, adjusted to 1982 dollars. The treatment variable, $W$, is a binary indicator denoting whether an individual was assigned to the NSW program. Additionally, demographic variables, including race and educational level, are included as auxiliary covariates in all M, DIDM, and DID estimations.

Let us revisit Figure \ref{fig:chabe_signed}, which was initially introduced in Section \ref{sec:introduction} to motivate our investigation. Recall that Figure \ref{fig:chabe_signed} illustrates the signed \textit{biases} of the estimates for the three alternative estimands, $\attm$, $\attdidm$, and $\attdid$, relative to the experimental estimates in percentage terms according to \citet[HIST; ][]{heckman1998characterizing} and \citet[ST; ][]{smith2005does}. 

In the current section, we focus on the black bars in Figure \ref{fig:chabe_signed}, representing the estimates by \citet[ST; ][]{smith2005does}. Notably, the double bracketing inequality, $\attm \leq \attdidm \leq \attdid$, is upheld when considering their point estimates. Since these represent the \textit{biases}, note that $\attm$ and $\attdidm$ are biased downward, whereas $\attdid$ is biased upward. These results suggest that the M estimand, $\attm$, tends to be conservative, while the DID estimand, $\attdid$, appears to be optimistic, assuming that the experimental estimates are the true values.

The observation above applies specifically to the estimates of $\attm$, $\attdidm$, and $\attdid$ selected by \cite{chabe2017should} from the various ST specifications. However, when we extend our analysis to include iterative estimations across all nine estimation methods employed by \citet{smith2005does} using the CPS and PSID datasets, we find that the double bracketing relationship, $\attm \leq \attdidm \leq \attdid$, is robustly upheld across the vast majority of these specifications. This is illustrated in Figure \ref{fig:smith_todd_all}, where the top and bottom charts display the results for the CPS and PSID data, respectively.
%%%%%%%%%%%%%%%%%%%%%%%%%%%%%%%%%%%%%%%
\begin{figure}[t]
\centering
\includegraphics[width=0.6\textwidth]{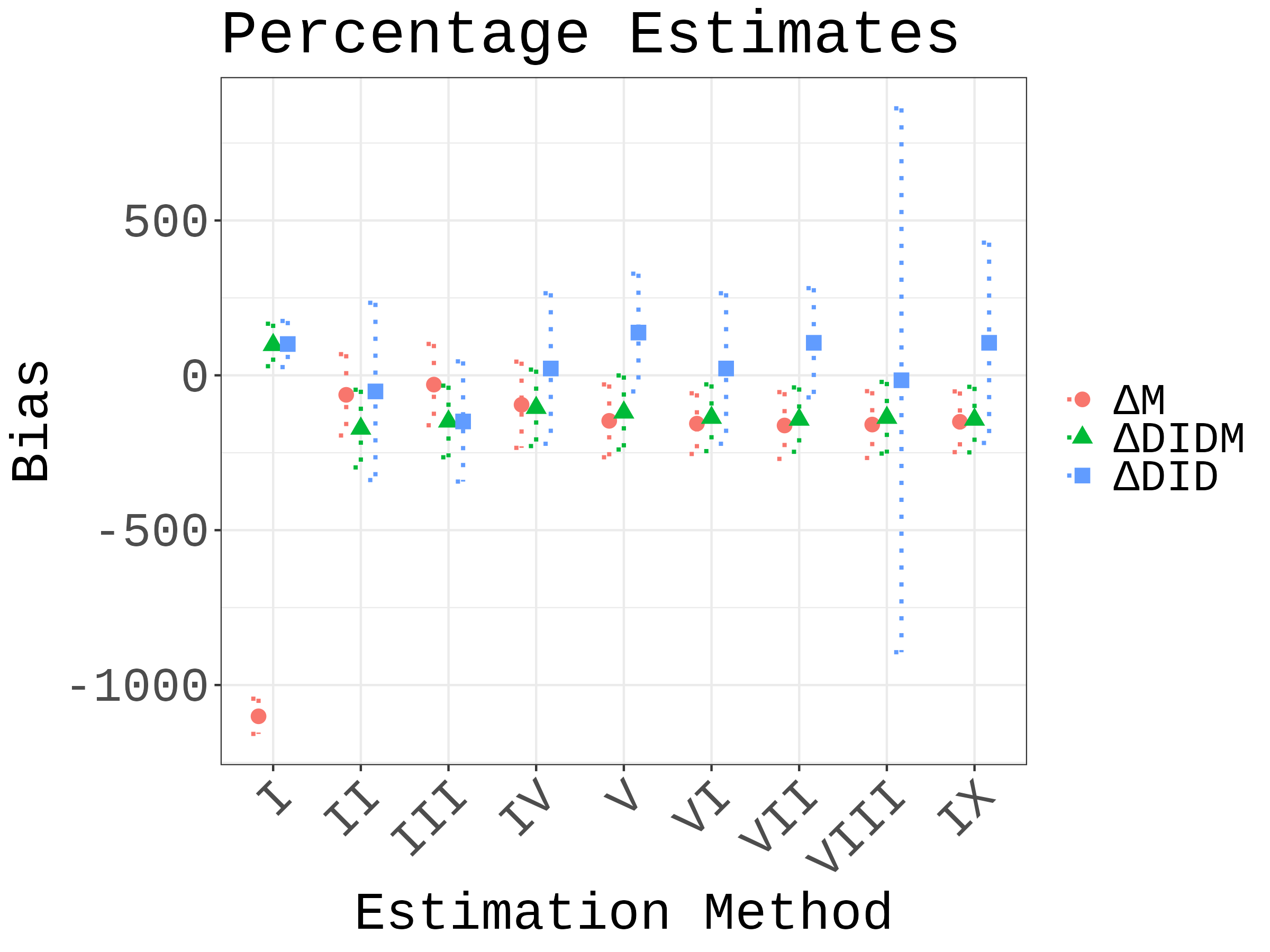}
\\${}$\\
\includegraphics[width=0.6\textwidth]{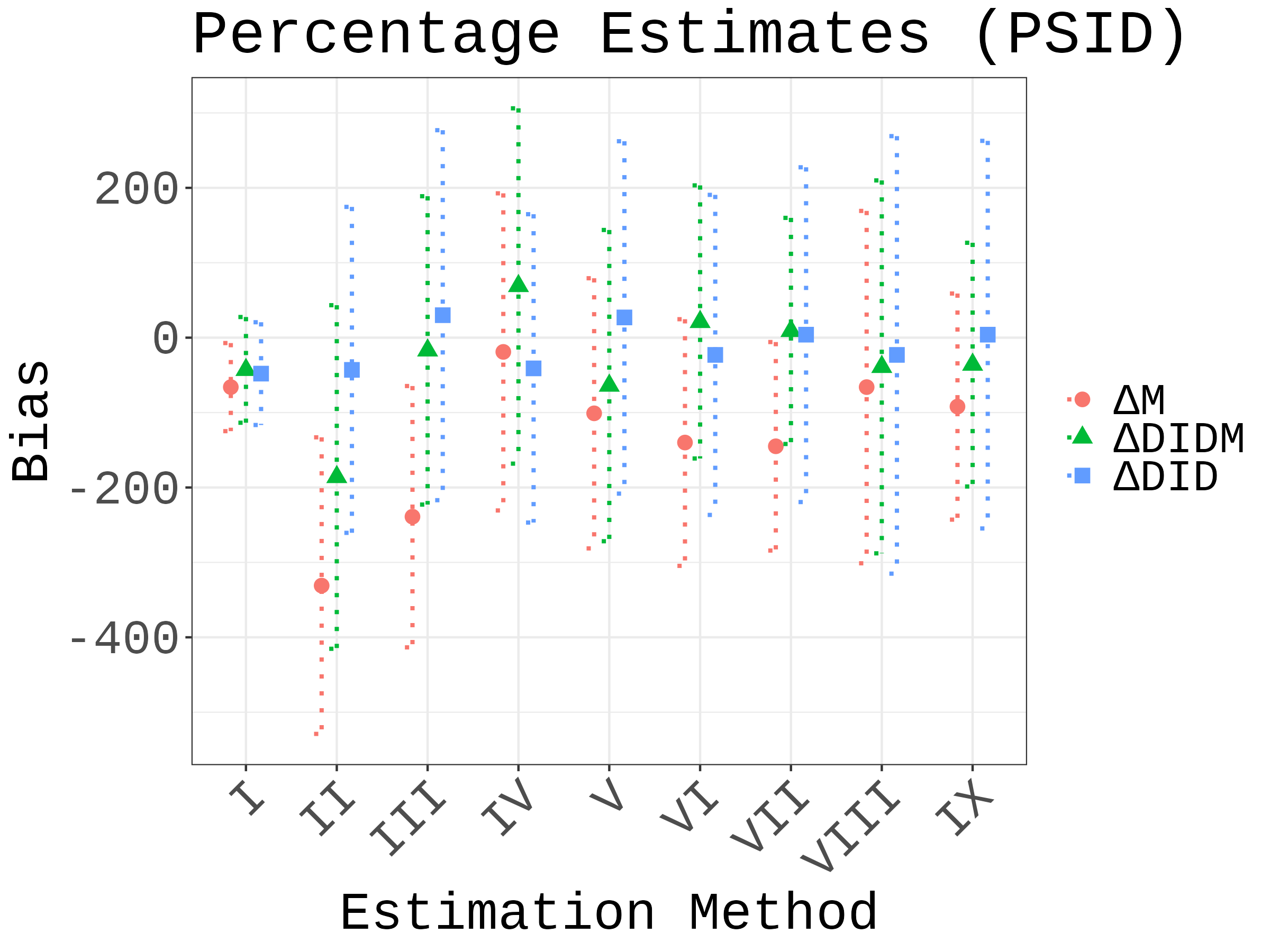}
\caption{The signed biases of M, DIDM, and DID estimates across all the estimation methods in Tables 5--6 of \citet{smith2005does} with the CPS data (top) and the PSID data (bottom). The estimation methods are labeled as follows: I - Mean difference, II - 1 Nearest neighbor without support, III - 10 Nearest-neighbors without support, IV - 1 Nearest-neighbor with support, V - 10 Nearest-neighbors with support, VI - Local linear matching (bw = 1.0), VII - Local linear matching (bw = 4.0), VIII - Local linear regression adjusted (bw = 1.0), IX - Local linear regression adjusted (bw = 4.0).}${}$
\label{fig:smith_todd_all}
\end{figure}
%%%%%%%%%%%%%%%%%%%%%%%%%%%%%%%%%%%%%%%
Moreover, it is important to note that this ordering of estimates is never rejected for any of the nine specifications. This consistency ensures that the result is not an artifact of a specific estimation procedure.
Consistently, we observe that the M estimand, $\attm$, tends to be conservative, while the DID estimand, $\attdid$, tends to be optimistic.

%%%%%%%%%%%%%%%%%%%%%%%%%%%%%%%%%%%
\subsubsection{The JTPA Program}\label{sec:jtpa}
%%%%%%%%%%%%%%%%%%%%%%%%%%%%%%%%%%%

Next, we reexamine the Job Training Partnership Act (JTPA) program studied by \cite{heckman1998characterizing}. Although the dataset they used has been extensively analyzed in the literature, a brief description is provided in Appendix \ref{sec:appendix:jtpa} for the sake of completeness and convenience of the readers.

The outcome variable, $Y_t$, represents participants' earnings adjusted for inflation. The treatment variable, $W$, indicates whether an individual was assigned to the JTPA program. Additionally, demographic covariates such as sex and age are included in all M, DIDM, and DID estimations.

Once again, we reiterate Figure \ref{fig:chabe_signed} that we provide in Section \ref{sec:introduction}. 
For the current section about the JTPA programs, focus on the gray bars in Figure \ref{fig:chabe_signed} based on the estimates by \citet[HIST; ][]{heckman1998characterizing}.
Notice that the inequality relationship, $\attm \leq \attdidm \leq \attdid$, is maintained when considering their point estimates. Specifically, the M estimand, $\attm$, is biased downward, while the DID estimand, $\attdid$, is biased upward.

%%%%%%%%%%%%%%%%%%%%%%%%%%%%%%%%%%%%%%%
\subsection{Educational Programs}\label{sec:educ}
%%%%%%%%%%%%%%%%%%%%%%%%%%%%%%%%%%%%%%%

Now, we turn to the educational programs analyzed by \citet{athey2020combining}, which were also studied in \cite{chetty2014measuring1, chetty2014measuring2}. In their study, \citet{athey2020combining} addressed the challenge of selection in observational by combining it with short-term experimental data. In contrast, our analysis focuses solely on observational data to examine the behavior of the observational estimands: $\attm$, $\attdidm$, and $\attdid$. 
Details of the data are provided in Appendix \ref{sec:appendix:educ}.

The outcome variable, $Y_t$, represents students' test scores, specifically standardized scores that average results from mathematics and English language arts. The treatment variable, $W$, indicates whether a student was assigned to a small class size. Additionally, covariates such as gender, race, and eligibility for free lunch are considered to define subpopulations in our analysis.

Given the large size of our dataset (see Appendix \ref{sec:appendix:educ} for details), we can precisely estimate the M, DIDM, and DID estimands even when the sample is divided into subpopulations based on observed attributes. Figure \ref{fig:educ:bracket} presents these estimates, along with their 95\% confidence intervals, for each subpopulation.
%%%%%%%%%%%%%%%%%%%%%%%%%%%%%%%%%%%%%%%
\begin{figure}[t]
\centering
\includegraphics[width=0.7\textwidth]{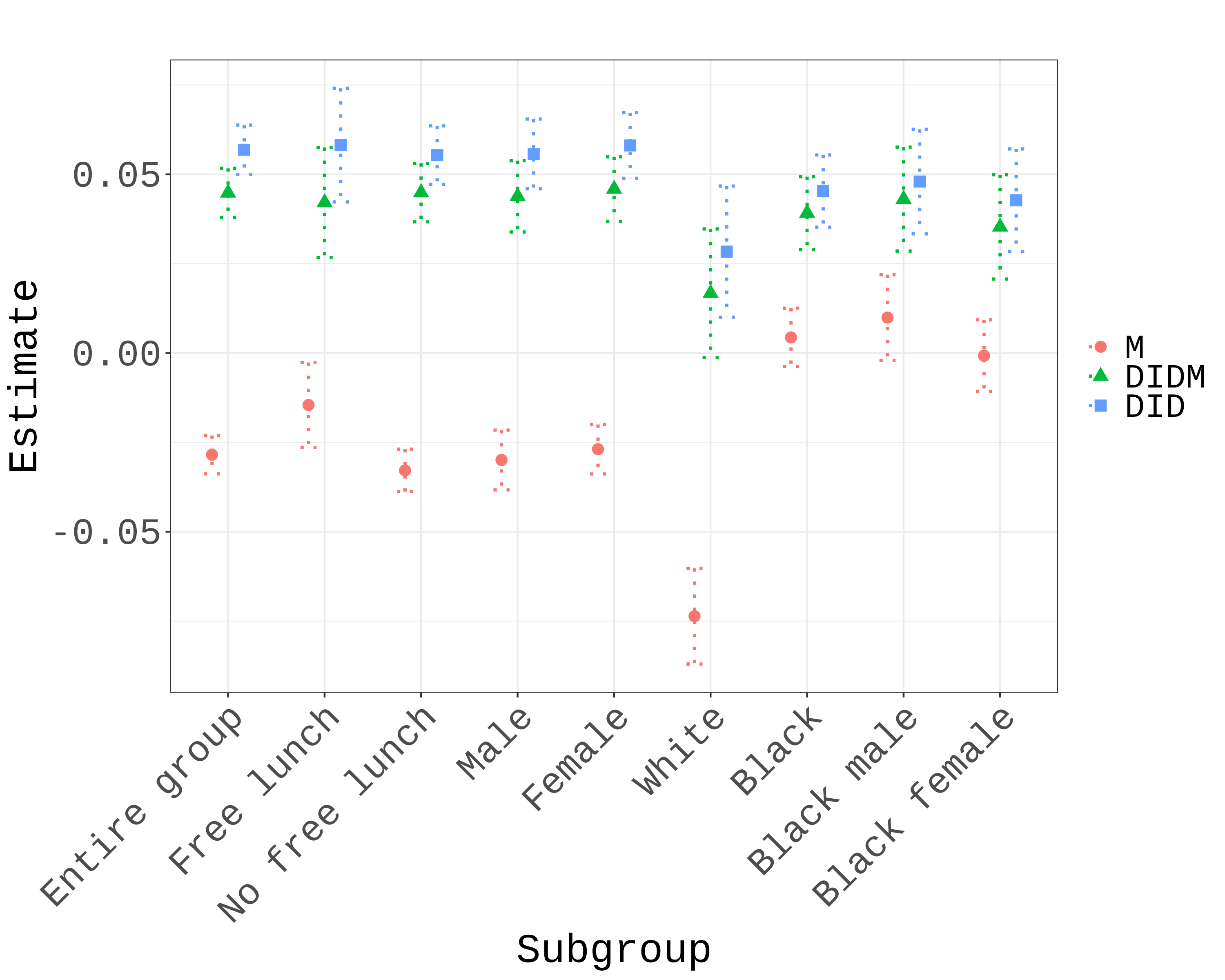}
\caption{Estimates along with their 95\% confidence intervals of the M, DIDM, and DID estimands for the effects of small class sizes on test scores for each subpopulation defined by observed attributes.}${}$
\label{fig:educ:bracket}
\end{figure}
%%%%%%%%%%%%%%%%%%%%%%%%%%%%%%%%%%%%%%%

Observe that the double bracketing relationship, $\attm \leq \attdidm \leq \attdid$, remains consistent across various subpopulations, despite notable differences in levels among these groups. Interestingly, the M estimand, $\attm$, generally indicates negative effects, except within the Black subpopulations. In contrast, both the DIDM estimand, $\attdidm$, and the DID estimand, $\attdid$, typically suggest positive effects. This divergence in implications between estimation methods raises questions about relying solely on observational data, highlighting the value of combining experimental and observational data, as demonstrated by \citet{athey2020combining}.

\section{The Double Bracketing: A Parametric Approach}\label{sec:parametric}
%%%%%%%%%%%%%%%%%%%%%%%%%%%%%%%%%%%%%%%%%%%%%%%%%%%%

So far, we have observed that the double bracketing relationship, $\attm \leq \attdidm \leq \attdid$, appears to be more of a rule than a mere coincidence in empirical studies on educational and job training programs. 
In this section, and the two subsequent ones, we are going to provide theoretical explanations for this intriguing phenomenon.

While general theories will be presented in the subsequent sections, we begin by illustrating how the double bracketing relationship,
$
\attm \leq \attdidm \leq \attdid
$
arises under a familiar parametric setting.
In particular, suppose the true data-generating process takes the form:
\begin{equation}\label{eq:parametric}
Y_{i,t} = \alpha_i + \beta W_i \mathbbm{1}\{t \geq 1\} + \gamma W_i + \delta_t  + \rho Y_{i,t-1} + \epsilon_{i,t},
\end{equation}
where the error term satisfies 
\begin{equation}\label{eq:parametric_orthogonality}
E[\epsilon_{i,1}|Y_{i,-1},W_i]=E[\epsilon_{i,0}|Y_{i,-1},W_i]=0.
\end{equation}
In this context, $\alpha_i$ and $\delta_t$ serve as two-way fixed effects and $\rho$ serves as a persistence parameter.

Note that $\gamma$ can be expressed as
$
\gamma = E[Y_{i,0}|W_i=1,Y_{i,-1}] - E[Y_{i,0}|W_i=0,Y_{i,-1}]
$
under \eqref{eq:parametric}--\eqref{eq:parametric_orthogonality}.
This value, $\gamma$, is non-positive if individuals with weakly lower $Y_{i,0}$ tend to select into treatment $W_i=1$ conditional on $Y_{i,-1}$ as is plausibly the case with job training programs.
We postulate this negative selection assumption:
\begin{equation}\label{eq:parametric:negative_selection_1}
\text{Negative Selection I:} \qquad \gamma = E[Y_{i,0}|W_i=1,Y_{i,-1}] - E[Y_{i,0}|W_i=0,Y_{i,-1}] \leq 0.
\end{equation}
We also postulate a similar negative selection assumption for $Y_{i,-1}$:
\begin{equation}\label{eq:parametric:negative_selection_2}
\text{Negative Selection II:} \qquad E[Y_{i,-1}|W_i=1] \leq E[Y_{i-1}|W_i=0].
\end{equation}
Finally, we impose the assumption of a non-explosive earnings process:
\begin{equation}\label{eq:parametric:non_explosive}
\text{Non-Explosive Process:} \qquad 0 \leq \rho \leq 1.
\end{equation}
Note that these three conditions \eqref{eq:parametric:negative_selection_1}--\eqref{eq:parametric:non_explosive} are analogous to the assumptions invoked by \citet[][Section 5]{angrist2009mostly} in developing their bracketing relationship, LDV $\leq$ FE.

Under the parametric data-generating process \eqref{eq:parametric}, some calculations yield
\begin{align}
\attm &= \beta + (1+\rho)\gamma,
\label{eq:parametric:m}
\\
\attdidm
&= \beta + \rho\gamma,
\qquad\qquad\text{and}
\label{eq:parametric:didm}
\\
\attdid &= \beta + \rho\gamma + \rho(1-\rho) (E[Y_{i,-1}|W_i=0] - E[Y_{i,-1}|W_i=1]).
\label{eq:parametric:did}
\end{align}
See Appendix \ref{sec:detailed_calculations} for detailed calculations that derive these expressions.
From \eqref{eq:parametric:m}--\eqref{eq:parametric:didm}, we have
\begin{equation}\label{eq:parametric:didm_m}
\attdidm-\attm = -\gamma \geq 0,
\end{equation}
where the inequality follows from $\gamma \leq 0$ due to \eqref{eq:parametric:negative_selection_1}.
From \eqref{eq:parametric:didm}--\eqref{eq:parametric:did}, we have
\begin{equation}\label{eq:parametric:did_didm}
\attdid-\attdidm = \rho(1-\rho) (E[Y_{i,-1}|W_i=0] - E[Y_{i,-1}|W_i=1]) \geq 0,
\end{equation}
where the inequality follows from $E[Y_{i,-1}|W_i=0] - E[Y_{i,-1}|W_i=1] \geq 0$ due to \eqref{eq:parametric:negative_selection_2} and $\rho \in [0,1]$ due to \eqref{eq:parametric:non_explosive}.

Combining \eqref{eq:parametric:didm_m}--\eqref{eq:parametric:did_didm} yields the double bracketing relationship,
\begin{align*}
\attm \leq \attdidm \leq \attdid,
\end{align*}
which holds true under the simple parametric model \eqref{eq:parametric} with the plausible restrictions \eqref{eq:parametric:negative_selection_1}--\eqref{eq:parametric:non_explosive} motivated by \citet[][Section 5]{angrist2009mostly}.

Finally, we highlight a couple of special cases in which the three estimands collapse into two.
In the absence of the first negative selection (i.e., $\gamma = 0$) the double bracketing relationship reduces to
\begin{align*}
\attm = \attdidm \leq \attdid.
\end{align*}
In other words, M and DIDM become equivalent.
Similarly, in the absence of persistence (i.d., $\rho=0$) or under the unit root (i.e., $\rho=1$), the double bracketing relationship reduces to
\begin{align*}
\attm \leq \attdidm = \attdid.
\end{align*}
In other words, DIDM and DID become equivalent.
In these two special cases, DIDM indeed becomes redundant.
However, it is generally distinct from the other two estimands otherwise.

%%%%%%%%%%%%%%%%%%%%%%%%%%%%%%%%%%%%%%%%%%%%%%%%%%%%
\section{The Double Bracketing in Nonparametric Frameworks}\label{sec:simple}
%%%%%%%%%%%%%%%%%%%%%%%%%%%%%%%%%%%%%%%%%%%%%%%%%%%%

The previous section derived the double-bracketing relationship $\attm \leq \attdidm \leq \attdid$ under a simple parametric model framework.
To ensure generality, we adopt a non-parametric and model-free framework in the current section.

Let
\begin{align*}
\Delta(\attm) =& \attm-\att,
\\
\Delta(\attdid) =& \attdid-\att, \qquad\text{and}
\\
\Delta(\attdidm) =& \attdidm-\att
\end{align*}
be the identification errors of the estimands, $\attm$, $\attdid$, and $\attdidm$, respectively.
Observe that $\attm \leq \attdidm \leq \attdid$ is equivalent to $\Delta(\attm) \leq \Delta(\attdidm) \leq \Delta(\attdid)$ with these notations.
Thus, we are going to establish the double bracketing relation, 
$\Delta(\attm) \leq \Delta(\attdidm) \leq \Delta(\attdid)$, in terms of the identification error.
To this end, we consider the following assumption.

\begin{assumption}\label{a:special}The following conditions hold.
\begin{enumerate}[(i)]
\item\label{a:special:selection}
$E[Y_0|W=0,Y_{-s}=y] \geq E[Y_0|W=1,Y_{-s}=y]$ for all $y$.
\item \label{a:special:sd}
$F_{Y_{-s}|W=0}$
first-order stochastically dominates
$F_{Y_{-s}|W=1}$.
\item\label{a:special:dec}
$y \mapsto \Phi(y) := E[Y_1-Y_0|W=0,Y_{-s}=y]$ is weakly decreasing.
\end{enumerate}
\end{assumption}

{\color{black}
The three components, \eqref{a:special:selection}--\eqref{a:special:dec}, of Assumption \ref{a:special} are analogous to \eqref{eq:parametric:negative_selection_1}--\eqref{eq:parametric:non_explosive}, respectively.
Each of the three components is empirically testable, as it does not involve unobserved latent variables such as $Y_t(d)$.
Furthermore, they are plausible and align with the assumptions made in the literature \citep[e.g.,][]{angrist2009mostly,ding2019bracketing}.
Specifically, part \eqref{a:special:selection} requires the so-called ``negative selection'' that individuals who opt out from treatment (i.e., those with $W=0$) tend to have weakly higher pre-treatment (potential) outcomes $Y_0=Y_0(0)$ without treatment on average given $Y_{-s}=y$.
Similarly, part \eqref{a:special:sd} requires the negative selection that individuals who opt out from treatment (i.e., those with $W=0$) tend to have no lower pre-treatment (potential) outcome $Y_{-s}=Y_{-s}(0)$ without treatment.
Part \eqref{a:special:dec} requires the time series $\{Y_t\}_t$ of outcomes for those with $W=0$ (i.e., $\{Y_t\}_t = \{Y_t(0)\}_t$) is stable over time. In real-world settings, especially among populations with limited socioeconomic status (SES), it may be reasonable to expect that outcomes do not escalate dramatically without intervention. For example, in educational programs targeting low-SES individuals, we wouldn't anticipate rapid, unsustainable improvements in outcomes without structured support on average.
 In the special case where $s=0$, this requirement essentially reflects the condition that the root of the AR(1) process is less than one, which is equivalent to stationarity, as deployed in, e.g., \cite{chetty2014measuring1}.

Each of these parts is analogous to the assumption made by \citet[][pp. 185]{angrist2009mostly} in deriving their bracketing relationship between the DID and lagged dependent variable (LDV) in linear and additive models.
In the special case of $s=0$, part \eqref{a:special:selection} will be trivially satisfied while parts \eqref{a:special:sd}--\eqref{a:special:dec} correspond to the assumptions invoked by \citet{ding2019bracketing}.
As argued at the end of Section \ref{sec:lagged_outcome}, our framework with $s \geq 0$ accommodates both this special case and other important scenarios considered by \citet{heckman1998characterizing} and others.
}

\begin{theorem}\label{theorem:main}
Suppose that Assumption \ref{a:special} holds.
Then, we have the bracketing relationship
\begin{align*}
\Delta(\attm) \leq \Delta(\attdidm) \leq \Delta(\attdid).
\end{align*}
\end{theorem}

\begin{proof}[Proof of Theorem \ref{theorem:main}]
Note that the identification errors can be written as
\begin{align*}
\Delta\left(\attm\right)= & E\left[E\left[Y_1(0) \mid W=1, Y_{-s}\right]-E\left[Y_1(0) \mid W=0, Y_{-s}\right] \mid W=1\right],
\\
\Delta\left(\attdid\right)= & E\left[Y_1(0)-Y_0(0) \mid W=1\right]-E\left[Y_1(0)-Y_0(0) \mid W=0\right],
\qquad\text{and}
\\
\Delta\left(\attdidm\right)= & E\left[E\left[Y_1(0)-Y_0(0) \mid W=1, Y_{-s}\right]-E\left[Y_1(0)-Y_0(0) \mid W=0, Y_{-s}\right] \mid W=1\right].
\end{align*}

First, observe that
\begin{align*}
\Delta(\attdidm) - \Delta(\attm)
=&
E[E[Y_0(0)|W=0,Y_{-s}] 
- E[Y_0(0)|W=1,Y_{-s}]|W=1]
\\
=&
E[E[Y_0|W=0,Y_{-s}] 
- E[Y_0|W=1,Y_{-s}]|W=1]
\geq 0,
\end{align*}
where 
the second equality is due to $Y_t(0)=Y_t$ for all $t \leq 0$, and
the last inequality follows from Assumption \ref{a:special} \eqref{a:special:selection}.

Second, observe that
\begin{align*}
&\Delta(\attdid) - \Delta(\attdidm)
\\
=&
E[E[Y_1(0)-Y_0(0)|W=0,Y_{-s}] | W=1]
-
E[Y_1(0)-Y_0(0)|W=0]
\\
=&
E[E[Y_1(0)-Y_0(0)|W=0,Y_{-s}] | W=1]
-
E[E[Y_1(0)-Y_0(0)|W=0,Y_{-s}] | W=0]
\\
=&
E[E[Y_1-Y_0|W=0,Y_{-s}] | W=1]
-
E[E[Y_1-Y_0|W=0,Y_{-s}] | W=0]
\\
=& 
E[\Phi(Y_{-s}) | W=1]
-
E[\Phi(Y_{-s}) | W=0]
\geq 0,
\end{align*}
where 
the second equality follows from the law of iterated expectations,
the third equality is due to $Y_t(0)=Y_t$ given $W=0$, and
the last inequality follows from Assumption \ref{a:special} \eqref{a:special:sd}--\eqref{a:special:dec}.
\end{proof}

This theorem provides a theoretical explanation for the double bracketing relationship, $\attm \leq \attdidm \leq \attdid$, which was robustly observed in Section \ref{sec:empirical_double_bracketing}.

Furthermore, this theorem implies the following three consequences depending on the underlying DGP.
First, when Condition M is true, then
\begin{align*}
0 = \Delta(\attm) \leq \Delta(\attdidm) \leq \Delta(\attdid).
\end{align*}
In this case, $\attm$ identifies $\att$ but $\attdidm$ and $\attdid$ tend to be upwardly biased.
Second, when Condition DID is true, then
\begin{align*}
\Delta(\attm) \leq \Delta(\attdidm) \leq \Delta(\attdid) = 0.
\end{align*}
In this case, $\attdid$ identifies $\att$ but $\attm$ and $\att$ tend to be downwardly biased.
Finally, when Condition DIDM is true, then
\begin{align*}
\Delta(\attm) \leq \Delta(\attdidm) = 0 \leq \Delta(\attdid).
\end{align*}
In this case, $\attdidm$ identifies $\att$ but $\attm$ tends to be downwardly biased while $\attdid$ tends to be upwardly biased.

Hence, for applications where non-negative treatment effects are expected, the DID estimand will generally incur optimistic estimates while the M estimand will produce conservative estimates.
The DIDM estimand can be optimistic or conservative, depending on the underlying DGP.
In summary, the M estimand is conservatively the most robust.

%%%%%%%%%%%%%%%%%%%%%%%%%%%%%%%%%%%%%%%%%%%%%%%%%%%%
\section{Extension to General Cases}\label{sec:general}
%%%%%%%%%%%%%%%%%%%%%%%%%%%%%%%%%%%%%%%%%%%%%%%%%%%%

The main theoretical result presented in Section \ref{sec:simple} extends to a more general class of data-generating processes and associated estimands.
In this section, we present an extension to general cases with a focus on the recent developments in the methods of event studies as principal examples.

%%%%%%%%%%%%%%%%%%%%%%%%%%%%%%%%%%%%%%%%%%%%%%%%%%%%
\subsection{Setup}
%%%%%%%%%%%%%%%%%%%%%%%%%%%%%%%%%%%%%%%%%%%%%%%%%%%%

The previous notations do \textit{not} carry over to the current section. 
Suppose that a researcher is interested in identifying the average treatment effect on the treated (ATT) defined by 
\begin{align}
\att = E\left[\left. \widetilde Y_1(1) - \widetilde Y_1(0) \right|W=1\right].
\label{eq:general:att}
\end{align}
At this moment, we have not introduced the specific meanings of the notations.
They will be discussed in the contexts of specific examples in Section \ref{sec:examples}.
With this said, we want to remark that they parallel with those notations introduced in Section \ref{sec:simple}.
Unlike the previous section, however, the subscripts no longer indicate the time in general.

Similarly to the previous section, we define the alternative estimands
\begin{align}
\attm &= E\left[\left. \widetilde Y_1 \right| W=1\right] - E\left[\left. E\left[\left. \widetilde Y_1 \right|W=0, X \right] \right| W=1\right],
\label{eq:general:m}
\\
\attdid &= E\left[\left. \widetilde Y_1 - \widetilde Y_0 \right| W= 1\right] - E\left[\left. \widetilde Y_1 - \widetilde Y_0 \right| W=0\right],
\qquad\text{and}
\label{eq:general:did}
\\
\attdidm &= E\left[\left. E\left[\left. \widetilde Y_1 - \widetilde Y_0 \right| X, W=1\right] - E\left[\left. \widetilde Y_1 - \widetilde Y_0 \right| X, W=0\right] \right|W=1\right],
\label{eq:general:didm}
\end{align}
called the matching (M), the difference-in-differences (DID), and the difference-in-differences matching (DIDM), respectively.
The $p$-dimensional random vector $X$ is now used as a matching criterion.

The following conditions are imposed:
\begin{align}\label{eq:pretreatment}
\text{Pre-Treatment:}&&    \widetilde Y_0(0)=&\widetilde Y_0 \text{ a.s. given $W \in \{0,1\}$.}
\\
\label{eq:comparison}
\text{Comparison:}&&    \widetilde Y_1(0)=&\widetilde Y_1 \text{ a.s. given } W=0.
\end{align}
Condition \eqref{eq:pretreatment} requires that observed outcomes be the potential outcome without treatment for every unit prior to treatment.
Condition \eqref{eq:comparison} requires that the observed outcome be the potential outcome without treatment for the control group.

\subsection{Examples of the M, DID, and DIDM Estimands in Event Studies}\label{sec:examples}

In this section, we demonstrate that our general framework \eqref{eq:general:att}--\eqref{eq:comparison} encompasses alternative estimands studied in the literature of event studies as examples.
%%%%%%%%%%%%%%%%%%%%%%%%%%%%%%%%%%%%%%%%%%%%%%%%%%%%
\subsubsection{Example 1: M in Event Studies}\label{sec:event_m}
%%%%%%%%%%%%%%%%%%%%%%%%%%%%%%%%%%%%%%%%%%%%%%%%%%%%

\citet{acemoglu2019democracy} consider the ATT
\begin{equation}\label{eq:acemoglu:att}
E\left[ Y_t^s(1) - Y_t^s(0) | D_t=1, D_{t-1}=0\right],
\end{equation}
where $D_t$ denotes the indicator of democracy, $Y_t^s(1)$ denotes the potential GDP in period $t+s$ when a country is treated between periods $t-1$ and $t$ (i.e., $D_t=1$ and $D_{t-1}=0$), and $Y_t^s(0)$ denotes the potential GDP in period $t+s$ when such a treatment does not occur (i.e., $D_t=D_{t-1}=0$).\footnote{The original paper by \citet{acemoglu2019democracy} considers
$
E\left[ (Y_t^s(1)-Y_{t-1}) - (Y_t^s(0)-Y_{t-1}) | D_t=1, D_{t-1}=0\right]
$
as the parameter of interest, 
where $Y_{t-1}$ denotes the realized GDP at period $t$,
but this is equivalent to \eqref{eq:acemoglu:att}.}
\citet{acemoglu2019democracy} identify this ATT by
\begin{equation}\label{eq:acemoglu:estimand_pre}
E\left[\left. Y_t^s-Y_{t-1} \right| D_t=1, D_{t-1}=0 \right]
-
E\left[\left. E\left[\left. Y_t^s-Y_{t-1} \right| D_t=0, D_{t-1}=0, X \right] \right| D_t=1, D_{t-1} = 0 \right],
\end{equation}
where
$Y_t^s$ denotes the observed GDP in period $t+s$,
$Y_{t-1}$ denotes the observed GDP in period $t-1$, and
$X := (Y_{t-1},\ldots,Y_{t-4})'$
in their baseline model with additional covariates in extended robustness analyses.

Since $X$ contains $Y_{t-1}$ in particular, the conditioning theorem\footnote{Specifically, the conditioning theorem yields $E[Y_{t-1}|D_t=0,D_{t-1}=0,X]=Y_{t-1}$ when $X$ contains $Y_{t-1}$.} cancels $Y_{t-1}$ between the two terms in \eqref{eq:acemoglu:estimand_pre}, so the identifying formula \eqref{eq:acemoglu:estimand_pre} of \citet{acemoglu2019democracy} boils down to
\begin{equation}\label{eq:acemoglu:estimand}
E\left[\left. Y_t^s \right| D_t=1, D_{t-1}=0 \right]
-
E\left[\left. E\left[\left. Y_t^s \right| D_t=0, D_{t-1}=0, X \right] \right| D_t=1, D_{t-1} = 0 \right].
\end{equation}

Our general framework encompasses this example.
Specifically, our ATT \eqref{eq:general:att} reduces to \eqref{eq:acemoglu:att} and our M estimand \eqref{eq:general:m} reduces to \eqref{eq:acemoglu:estimand} by setting
\begin{align*}
\widetilde Y_0(0) :=& Y_{t-1},
&
\widetilde Y_0(1) :=& Y_{t-1},
&
\widetilde Y_0 :=& Y_{t-1},
\\
\widetilde Y_1(0) :=& Y_t^s(0),
&
\widetilde Y_1(1) :=& Y_t^s(1),
&
\widetilde Y_1 :=& Y_t^s,
\end{align*}
\begin{align*}
\text{and} \qquad
W := 
\begin{cases}
1 & \text{if } D_t=1 \text{ and } D_{t-1}=0
\\
0 & \text{if } D_t=0 \text{ and } D_{t-1}=0
\\
-1 & \text{otherwise}
\end{cases}
\end{align*}
The pre-treatment condition \eqref{eq:pretreatment} is satisfied by construction, as $\widetilde Y_0(0) = Y_{t-1} = \widetilde Y_0$.
The comparison condition \eqref{eq:comparison} is also satisfied by construction via the definition of $Y_t^s(0)$ as the potential outcome under $D_t=D_{t-1}=0$.
Namely, $\widetilde Y_1(0) = Y_t^s(0) = Y_t^s = \widetilde Y_1$ holds given $D_t=D_{t-1}=0$.

\subsubsection{Example 2: DID in Event Studies}\label{sec:event_did}
%%%%%%%%%%%%%%%%%%%%%%%%%%%%%%%%%%%%%%%%%%%%%%%%%%%%
\citet{callaway2018difference} consider the ATT
\begin{equation}\label{eq:callaway:att}
E\left[\left. Y_t(g) - Y_t(\infty) \right| G=g \right],
\end{equation}
where $G$ denotes the treatment period, $Y_t(g)$ denotes the potential outcome at period $t \geq g$ when an individual is treated at period $g$, and $Y_t(\infty)$ denotes the potential outcome at period $t$ when an individual does not receive a treatment.
\citet{callaway2018difference} identify this ATT by
\begin{equation}\label{eq:callaway:estimand}
E\left[\left.Y_t - Y_{g-1} \right| G=g \right]
-
E\left[\left.Y_t - Y_{g-1} \right| G=g' \right]
\end{equation}
for $g' \geq t+1$,
where $Y_t$ denotes the observed outcome at period $t$.
(The second term of \eqref{eq:callaway:estimand} may be aggregated over $G' \in \{t+1,t+2,\ldots\}$.)

Our general framework encompasses this example.
Specifically, our ATT \eqref{eq:general:att} reduces to \eqref{eq:callaway:att} and our DID estimand \eqref{eq:general:did} reduces to \eqref{eq:callaway:estimand} by setting
\begin{align*}
\widetilde Y_0(0) :=& Y_{g-1}(\infty),
&
\widetilde Y_0(1) :=& Y_{g-1}(g),
&
\widetilde Y_0 :=& Y_{g-1},
\\
\widetilde Y_1(0) :=& Y_t(\infty),
&
\widetilde Y_1(1) :=& Y_t(g),
&
\widetilde Y_1 :=& Y_t,
\end{align*}
\begin{align*}
\text{and} \qquad
W := 
\begin{cases}
1 & \text{if } G=g
\\
0 & \text{if } G=g'
\\
-1 & \text{otherwise}
\end{cases}
\end{align*}
The pre-treatment condition \eqref{eq:pretreatment} is satisfied by construction, as $\widetilde Y_0(0) = Y_{g-1}(\infty) = Y_{g-1} = \widetilde Y_0$ given $G=g$ or $G=g' \geq t+1 > g$.
The comparison condition \eqref{eq:comparison} is also satisfied by construction, as $\widetilde Y_1(0) = Y_t(\infty) = Y_t = \widetilde Y_1$ given $G=g' \geq t+1$.

%%%%%%%%%%%%%%%%%%%%%%%%%%%%%%%%%%%%%%%%%%%%%%%%%%%%
\subsubsection{Example 3: DIDM in Event Studies}\label{sec:event_m}
%%%%%%%%%%%%%%%%%%%%%%%%%%%%%%%%%%%%%%%%%%%%%%%%%%%%
\citet[][Section 4.1]{dube2023local} consider the ATT
\begin{equation}\label{eq:dube:att}
E\left[ Y_{t+h}(1) - Y_{t+h}(0) | \Delta D_t=1\right],
\end{equation}
where $\Delta D_t$ denotes the indicator of policy change, $Y_{t+h}(1)$ denotes the potential outcome in period $t+h$ when a policy changes between periods $t-1$ and $t$ (i.e., $\Delta _t=1$), and $\Delta Y_{t+h}(0)$ denotes the potential outcome in period $t+h$ when such a change does not occur (i.e., $\Delta D_t=0$).
\citet[][Section 4.1]{dube2023local} identify this ATT by
\begin{equation}\label{eq:dube:estimand}
E\left[\left. Y_{t+h}-Y_{t-1} \right| \Delta D_t=1 \right]
-
E\left[\left. E\left[\left. Y_{t+h}-Y_{t-1} \right| \Delta D_t=0, X \right] \right| \Delta D_t=1 \right],
\end{equation}
where
$Y_{t+h}$ denotes the observed outcome in period $t+h$,
$Y_{t-1}$ denotes the observed outcome in period $t-1$, and
$X$ is a vector of general covariates.

Our general framework encompasses this example.
Specifically, our ATT \eqref{eq:general:att} reduces to \eqref{eq:dube:att} and our DIDM estimand \eqref{eq:general:didm} reduces to \eqref{eq:dube:estimand} by setting
\begin{align*}
\widetilde Y_0(0) :=& Y_{t-1},
&
\widetilde Y_0(1) :=& Y_{t-1},
&
\widetilde Y_0 :=& Y_{t-1},
\\
\widetilde Y_1(0) :=& Y_{t+h}(0),
&
\widetilde Y_1(1) :=& Y_{t+h}(1),
&
\widetilde Y_1 :=& Y_{t+h},
\end{align*}
\begin{align*}
\text{and} \qquad
W := 
\begin{cases}
1 & \text{if } \Delta D_t=1
\\
0 & \text{if } \Delta D_t=0
\\
-1 & \text{otherwise}
\end{cases}
\end{align*}
The pre-treatment condition \eqref{eq:pretreatment} is satisfied by construction, as $\widetilde Y_0(0) = Y_{t-1} = \widetilde Y_0$.
The comparison condition \eqref{eq:comparison} is also satisfied by construction via the definition of $Y_{t+h}(0)$ as the potential outcome under $\Delta D_t=0$.
Namely, $\widetilde Y_1(0) = Y_{t+h}(0) = Y_{t+h} = \widetilde Y_1$ holds given $\Delta D_t=0$.

Also see the DID${}_\text{M}$ estimator of \citet{deChaisemartin2020two}, and the (panel) matching estimator of \citet{imai2023matching}, as well as the extended DID method of \citet[][Section 4.1]{dube2023local} -- they all propose and analyze the properties of what we refer to as the DIDM.

As pointed out by \citet{dube2023local}, their framework encompasses \citet{acemoglu2019democracy} as a special case. Indeed, when $X$ contains $\widetilde Y_0$, as is the case with \citet{acemoglu2019democracy} presented in Section \ref{sec:event_m}, our DIDM framework reduces to our M framework.
In general, however, the DIDM differs from the M.

\subsubsection{Summary and Discussions of the Three Examples}
%%%%%%%%%%%%%%%%%%%%%%%%%%%%%%%%%%%%%%%%%%%%%%%%%%%%
Albeit there are slight differences in their notations, the three examples presented above focus on similar setups.
They fundamentally differ only in terms of the estimands: the three examples focus on the M, DID, and DIDM estimands in our language.
In general, a researcher does not know which of them achieves the identification.
The M estimand identifies the true ATT (i.e, $\attm = \att$ holds) if the matching condition
\begin{align*}
\text{Condition M:} \ \ \ \left.\left(\widetilde Y_1(1), \widetilde Y_1(0) \right) \indep W \right| X
\end{align*}
is satisfied.
The DID estimand identifies the true ATT (i.e, $\attdid = \att$ holds) if the parallel trend condition
\begin{align*}
\text{Condition DID:} \ \ \
\E\left[\left.\widetilde Y_1(0)-\widetilde Y_0(0)\right|W=0\right]
=
\E\left[\left.\widetilde Y_1(0)-\widetilde Y_0(0)\right|W=1\right]
\end{align*}
is satisfied.
Finally, the DIDM estimand identifies the true ATT (i.e, $\attdidm = \att$ holds) if the conditional parallel trend condition
\begin{align*}
\text{Condition DIDM:} \ \ \
\E\left[\left.\widetilde Y_1(0)-\widetilde Y_0(0)\right|X,W=0\right]
=
\E\left[\left.\widetilde Y_1(0)-\widetilde Y_0(0)\right|X,W=1\right]
\end{align*}
is satisfied.
In the absence of knowledge of the underlying data-generating process, however, committing to a wrong assumption can lead to biased estimates by M, DID, or DIDM.
It is therefore of interest to characterize the relation among the three estimands.
The following subsection investigates this point.

%%%%%%%%%%%%%%%%%%%%%%%%%%%%%%%%%%%%%%%%%%%%%%%%%%%%
\subsection{The General Double Bracketing Result}
%%%%%%%%%%%%%%%%%%%%%%%%%%%%%%%%%%%%%%%%%%%%%%%%%%%%
Now, focus on the generic framework \eqref{eq:general:att}--\eqref{eq:comparison} again.
Let
\begin{align*}
\Delta(\attm) =& \attm-\att,
\\
\Delta(\attdid) =& \attdid-\att, \qquad\text{and}
\\
\Delta(\attdidm) =& \attdidm-\att
\end{align*}
be the identification errors of the estimands, $\attm$, $\attdid$, and $\attdidm$, respectively.
We establish the double bracketing relation 
$\Delta(\attm) \leq \Delta(\attdidm) \leq \Delta(\attdid)$ under the following assumption.

\begin{assumption}\label{a:general}The following conditions hold.
\begin{enumerate}[(i)]
\item\label{a:general:selection}
$E\left[\left.\widetilde Y_0 \right|W=0,X=x\right] \geq E\left[\left. \widetilde Y_0 \right|W=1,X=x\right]$ for all $x$.
\item \label{a:general:sd}
$F_{X|W=0}$
multivariate first-order stochastically dominates
$F_{X|W=1}$.\footnote{We say that $X$ multivariate first-order stochastically dominates $X^\ast$ if $P(X \leq x) \leq P(X^\ast \leq x)$ holds for all $x \in \mathbb{R}^p$.}
\item\label{a:general:dec}
$x \mapsto \Phi(x) := E\left[\left.\widetilde Y_1-\widetilde Y_0\right|W=0,X=x\right]$ is weakly decreasing.\footnote{We say that $\Phi$ is weakly decreasing if $\Phi(x_1,\ldots,x_p) \geq \Phi(x^\ast_1,\ldots,x^\ast_p)$ holds whenever $x_1 \leq x^\ast_1$, $\ldots$ and, $x_p \leq x^\ast_p$.}
\end{enumerate}
\end{assumption}
The three parts \eqref{a:general:selection}--\eqref{a:general:dec} of this assumption parallel those in Assumption \ref{a:special}, albeit that $X$ is now possibly multi-dimensional.
Hence, similar interpretations can be made especially when $X$ consists of lagged outcomes as in the first example \textit{a la} \citet{acemoglu2019democracy} presented in Section \ref{sec:event_m}.
Such a convenient interpretation may not be feasible if $X$ contains other covariates, but we want to stress that each of the three conditions \eqref{a:general:selection}--\eqref{a:general:dec} of this assumption is still empirically testable.

The following theorem states the extended double bracketing result for the general cases.

\begin{theorem}\label{theorem:general}
Suppose that Assumption \ref{a:general} holds for \eqref{eq:general:att}--\eqref{eq:comparison}.
Then, we have the bracketing relationship
\begin{align*}
\Delta(\attm) \leq \Delta(\attdidm) \leq \Delta(\attdid).
\end{align*}
\end{theorem}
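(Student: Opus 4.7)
The plan is to adapt the proof of Theorem \ref{theorem:main} to the general framework, with the scalar lagged outcome $Y_{-s}$ replaced by the (possibly multi-dimensional) matching vector $X$, and with the pre-treatment condition \eqref{eq:pretreatment} and comparison condition \eqref{eq:comparison} playing the role of the implicit identities $Y_t(0)=Y_t$ for $t\leq 0$ and for $W=0$ used in the simpler setup. First, I would rewrite the three identification errors in terms of untreated potential outcomes:
\begin{align*}
\Delta(\attm) &= E\left[ E[\widetilde Y_1(0)\mid W=1,X] - E[\widetilde Y_1(0)\mid W=0,X] \;\middle|\; W=1\right], \\
\Delta(\attdid) &= E[\widetilde Y_1(0)-\widetilde Y_0(0)\mid W=1] - E[\widetilde Y_1(0)-\widetilde Y_0(0)\mid W=0], \\
\Delta(\attdidm) &= E\left[ E[\widetilde Y_1(0)-\widetilde Y_0(0)\mid W=1,X] - E[\widetilde Y_1(0)-\widetilde Y_0(0)\mid W=0,X] \;\middle|\; W=1\right],
\end{align*}
where \eqref{eq:comparison} is what justifies replacing $\widetilde Y_1$ with $\widetilde Y_1(0)$ inside every conditional expectation that sits under $W=0$.

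Next, I would take the difference $\Delta(\attdidm)-\Delta(\attm)$, noting that the $\widetilde Y_1(0)$ terms cancel and leave
\[
\Delta(\attdidm)-\Delta(\attm) = E\left[ E[\widetilde Y_0(0)\mid W=0,X] - E[\widetilde Y_0(0)\mid W=1,X] \;\middle|\; W=1\right].
\]
By the pre-treatment condition \eqref{eq:pretreatment}, $\widetilde Y_0(0)=\widetilde Y_0$ almost surely given $W\in\{0,1\}$, so this coincides with $E[E[\widetilde Y_0\mid W=0,X]-E[\widetilde Y_0\mid W=1,X]\mid W=1]$, which is non-negative by Assumption \ref{a:general}\eqref{a:general:selection}.

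For the second inequality, the law of iterated expectations applied to $E[\widetilde Y_1(0)-\widetilde Y_0(0)\mid W=0]$ yields
\[
\Delta(\attdid)-\Delta(\attdidm) = E[\Phi^{\ast}(X)\mid W=1] - E[\Phi^{\ast}(X)\mid W=0],
\]
where $\Phi^{\ast}(x):=E[\widetilde Y_1(0)-\widetilde Y_0(0)\mid W=0,X=x]$. Applying \eqref{eq:pretreatment} and \eqref{eq:comparison}, both of which hold under $W=0$, identifies $\Phi^{\ast}$ with the function $\Phi$ that is coordinate-wise weakly decreasing by Assumption \ref{a:general}\eqref{a:general:dec}. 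The main step, and the only place where the extension from scalar $Y_{-s}$ to vector $X$ requires real care, is to invoke the multivariate analogue of the stochastic-dominance argument used in Theorem \ref{theorem:main}: since $F_{X\mid W=0}$ multivariate first-order stochastically dominates $F_{X\mid W=1}$ by Assumption \ref{a:general}\eqref{a:general:sd}, and since $\Phi$ is coordinate-wise weakly decreasing by Assumption \ref{a:general}\eqref{a:general:dec}, we obtain $E[\Phi(X)\mid W=1]\geq E[\Phi(X)\mid W=0]$. This yields the desired non-negativity and completes the double bracketing.
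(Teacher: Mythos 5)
Your proposal is correct and follows essentially the same route as the paper's own proof: the same rewriting of the three identification errors in terms of untreated potential outcomes, the same two pairwise differences, with \eqref{eq:pretreatment} and Assumption \ref{a:general}\eqref{a:general:selection} handling the first gap and the law of iterated expectations together with \eqref{eq:pretreatment}--\eqref{eq:comparison}, multivariate first-order stochastic dominance, and the coordinate-wise monotonicity of $\Phi$ handling the second.
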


\begin{proof}[Proof of Theorem \ref{theorem:general}  ]
Note that the identification errors can be written as
\begin{align*}
\Delta\left(\attm\right)= & E\left[E\left[\widetilde Y_1(0) \mid W=1, X\right]-E\left[\widetilde Y_1(0) \mid W=0, X\right] \mid W=1\right],
\\
\Delta\left(\attdid\right)= & E\left[\widetilde Y_1(0)-\widetilde Y_0(0) \mid W=1\right]-E\left[\widetilde Y_1(0)-\widetilde Y_0(0) \mid W=0\right],
\qquad\text{and}
\\
\Delta\left(\attdidm\right)= & E\left[E\left[\widetilde Y_1(0)-\widetilde Y_0(0) \mid W=1, X\right]-E\left[\widetilde Y_1(0)-\widetilde Y_0(0) \mid W=0, X\right] \mid W=1\right].
\end{align*}

First, observe that
\begin{align*}
\Delta(\attdidm) - \Delta(\attm)
=&
E\left[E\left[\widetilde Y_0(0)|W=0,X\right] 
- E\left[\widetilde Y_0(0)|W=1,X\right]|W=1\right]
\\
=&
E\left[E\left[\widetilde Y_0|W=0,X\right] 
- E\left[\widetilde Y_0|W=1,\right]|W=1\right]
\geq 0
\end{align*}
where 
the second equality is due to \eqref{eq:pretreatment}, and
the last inequality follows from Assumption \ref{a:general} \eqref{a:general:selection}.

Second, observe that
\begin{align*}
&\Delta(\attdid) - \Delta(\attdidm)
\\
=&
E\left[E\left[\widetilde Y_1(0)-\widetilde Y_0(0)|W=0,X\right] | W=1\right]
-
E\left[\widetilde Y_1(0)-\widetilde Y_0(0)|W=0\right]
\\
=&
E\left[E\left[\widetilde Y_1(0)-\widetilde Y_0(0)|W=0,X\right] | W=1\right]
-
E\left[E\left[\widetilde Y_1(0)-\widetilde Y_0(0)|W=0,X\right] | W=0\right]
\\
=&
E\left[E\left[\widetilde Y_1-\widetilde Y_0|W=0,X\right] | W=1\right]
-
E\left[E\left[\widetilde Y_1-\widetilde Y_0|W=0,X\right] | W=0\right]
\\
=& 
E\left[\Phi(X) | W=1\right]
-
E\left[\Phi(X) | W=0\right]
\geq 0,
\end{align*}
where 
the second equality follows from the law of iterated expectations,
the third equality is due to \eqref{eq:pretreatment}--\eqref{eq:comparison}, and
the last inequality follows from Assumption \ref{a:general} \eqref{a:general:sd}--\eqref{a:general:dec}.
\end{proof}

%%%%%%%%%%%%%%%%%%%%%%%%%%%%%%%%%%%%%%%
\section{Empirical Evidence of the Assumptions}\label{sec:empirical_assumption}
%%%%%%%%%%%%%%%%%%%%%%%%%%%%%%%%%%%%%%%

With Sections \ref{sec:simple}--\ref{sec:general} providing theoretical justifications for the double bracketing relationship empirically observed in Section \ref{sec:empirical_double_bracketing}, we will now examine whether the underlying assumption (Assumption \ref{a:special} or \ref{a:general}) of our theory is satisfied by the datasets used in Section \ref{sec:empirical_double_bracketing}.
Specifically, we revisit the CPS and PSID datasets from Section \ref{sec:nsw} and the observational dataset from Section \ref{sec:educ}.\footnote{Unfortunately, we are unable to present our analysis for the JTPA program discussed in Section \ref{sec:jtpa} due to discrepancies between the currently available microdata and the original microdata used by the authors, which we confirmed through repeated communications with them.}

%%%%%%%%%%%%%%%%%%%%%%%%%%%%%%%%%%%%%%%
\subsection{Job Training Programs}\label{sec:nsw:assumption}
%%%%%%%%%%%%%%%%%%%%%%%%%%%%%%%%%%%%%%%

Recall that Section \ref{sec:nsw} demonstrates the robustness of the double bracketing relationship, $\attm \leq \attdidm \le \attdid$, for the NSW program using both the CPS data set and the PSID data set.
In light of these consistent observations and our theoretical prediction provided in Theorem \ref{theorem:main}, the current section examines each of the three parts, (i)--(iii), of Assumption \ref{a:special} in detail, using both data sets.

The first condition is Assumption \ref{a:special} \eqref{a:special:selection} requiring that $E[Y_0|W=0,Y_{-s}=y] \geq E[Y_0|W=1,Y_{-s}=y]$ holds for all $y$.
To check this condition, we estimate the conditional expectation functions, $y \mapsto E[Y_0|W=0,Y_{-s}=y]$ and $y \mapsto E[Y_0|W=1,Y_{-s}=y]$, non-parametrically by the partitioning-based least squares regression.\footnote{\label{foot:nonparametric_estimation}We use the R package \texttt{lspartition} developed by \citet{cattaneo2019lspartition}. We took all the default parameters.}
Their estimates are plotted in Figure \ref{fig:nsw:ass1}.
%%%%%%%%%%%%%%%%%%%%%%%%%%%%%%%%%%%%%%%
\begin{figure}[t]
\centering
CPS\\
\includegraphics[width=0.55\textwidth]{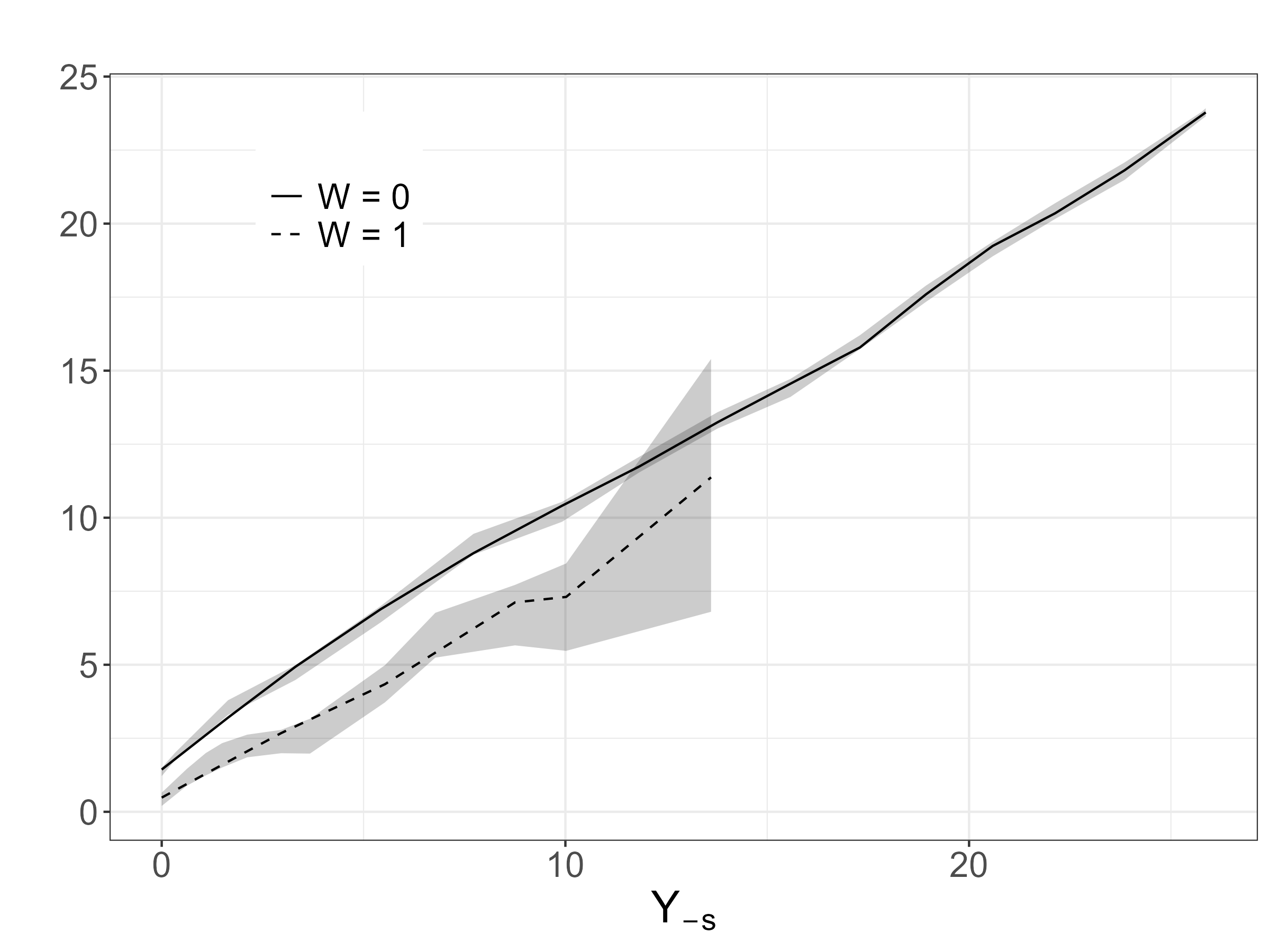}
\\
PSID\\
\includegraphics[width=0.55\textwidth]{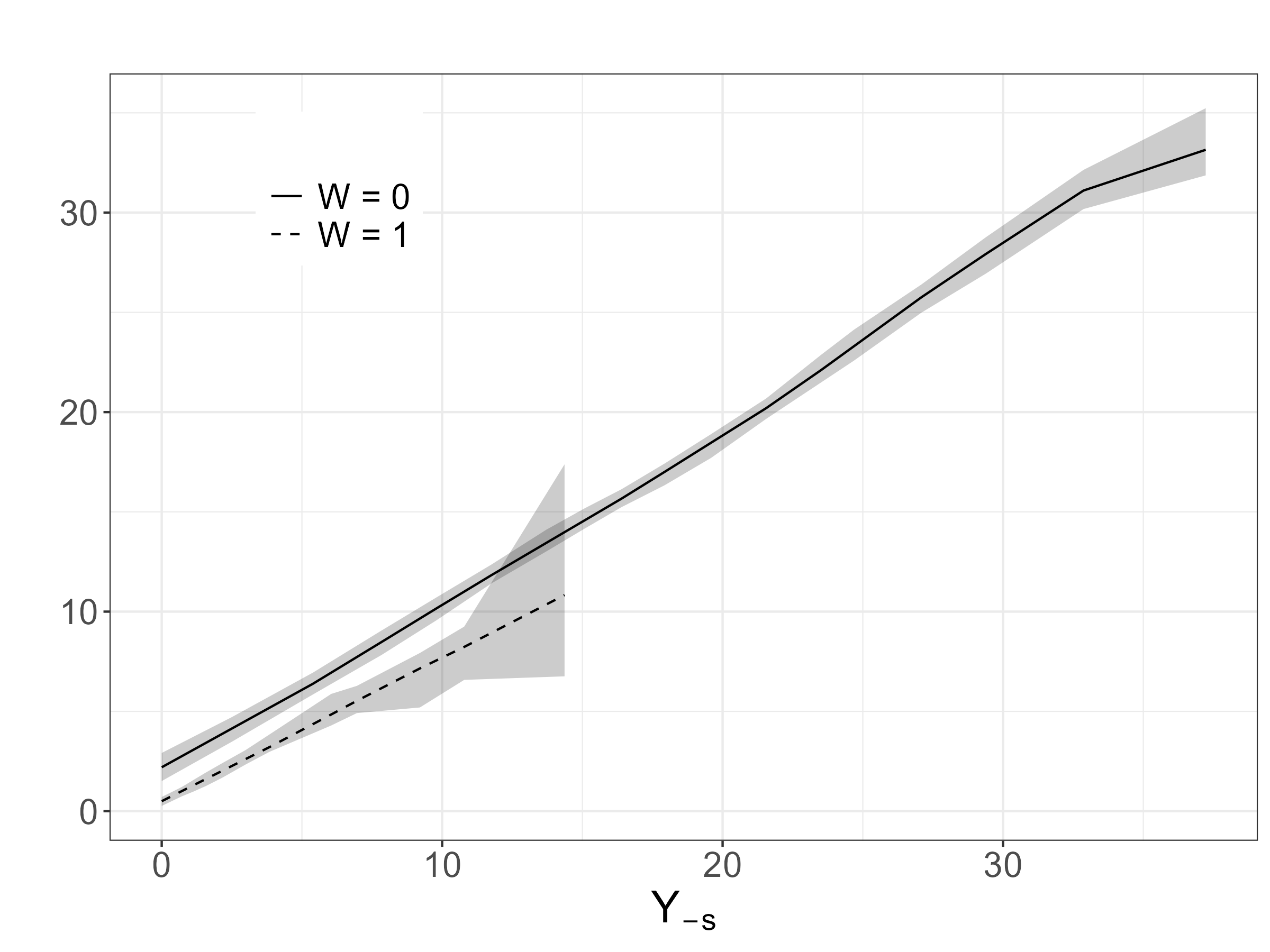}
\caption{Evidence of Assumption \ref{a:special} \eqref{a:special:selection} for the NSW program using the CPS (top) and PSID (bottom) data sets. The solid and dashed lines represent estimates of the conditional expectation functions $y \mapsto E[Y_0|W=0,Y_{-s}=y]$ and $y \mapsto E[Y_0|W=1,Y_{-s}=y]$, respectively. Shaded areas denote 95\% confidence bands. Both axes are measured in thousands of U.S. dollars. For details on the estimation method, refer to Footnote \ref{foot:nonparametric_estimation}.}${}$
\label{fig:nsw:ass1}
\end{figure}
%%%%%%%%%%%%%%%%%%%%%%%%%%%%%%%%%%%%%%%
The solid and dashed lines represent estimates of the conditional expectation functions $y \mapsto E[Y_0|W=0,Y_{-s}=y]$ and $y \mapsto E[Y_0|W=1,Y_{-s}=y]$, respectively. 
Shaded areas denote their 95\% confidence bands.\footnote{\label{foot:noncentered}We remark that the confidence bands are not centered around the estimates in \texttt{lspartition}, because of bias correction.}
The top image in this figure is based on the CPS data set while the bottom one is based on the PSID data set.
For both of the two data sets, the inequality $E[Y_0|W=0,Y_{-s}=y] \geq E[Y_0|W=1,Y_{-s}=y]$ is clearly satisfied for all $y$, providing evidence in support of our Assumption \ref{a:special} \eqref{a:special:selection}.

The second condition, Assumption \ref{a:special} \eqref{a:special:sd}, requires that $F_{Y_{-s}|W=0}$ first-order stochastically dominates $F_{Y_{-s}|W=1}$.
We estimate the conditional cumulative distribution functions, $F_{Y_{-s}|W=0}$ and $F_{Y_{-s}|W=1}$, by their empirical counterparts, i.e., empirical CDFs conditionally on $W=0$ and $W=1$, respectively.
Their estimates are plotted in Figure \ref{fig:nsw:ass3}.
%%%%%%%%%%%%%%%%%%%%%%%%%%%%%%%%%%%%%%%
\begin{figure}[t]
\centering
\begin{tabular}{cc}
CPS & PSID\\
\includegraphics[width=0.5\textwidth]{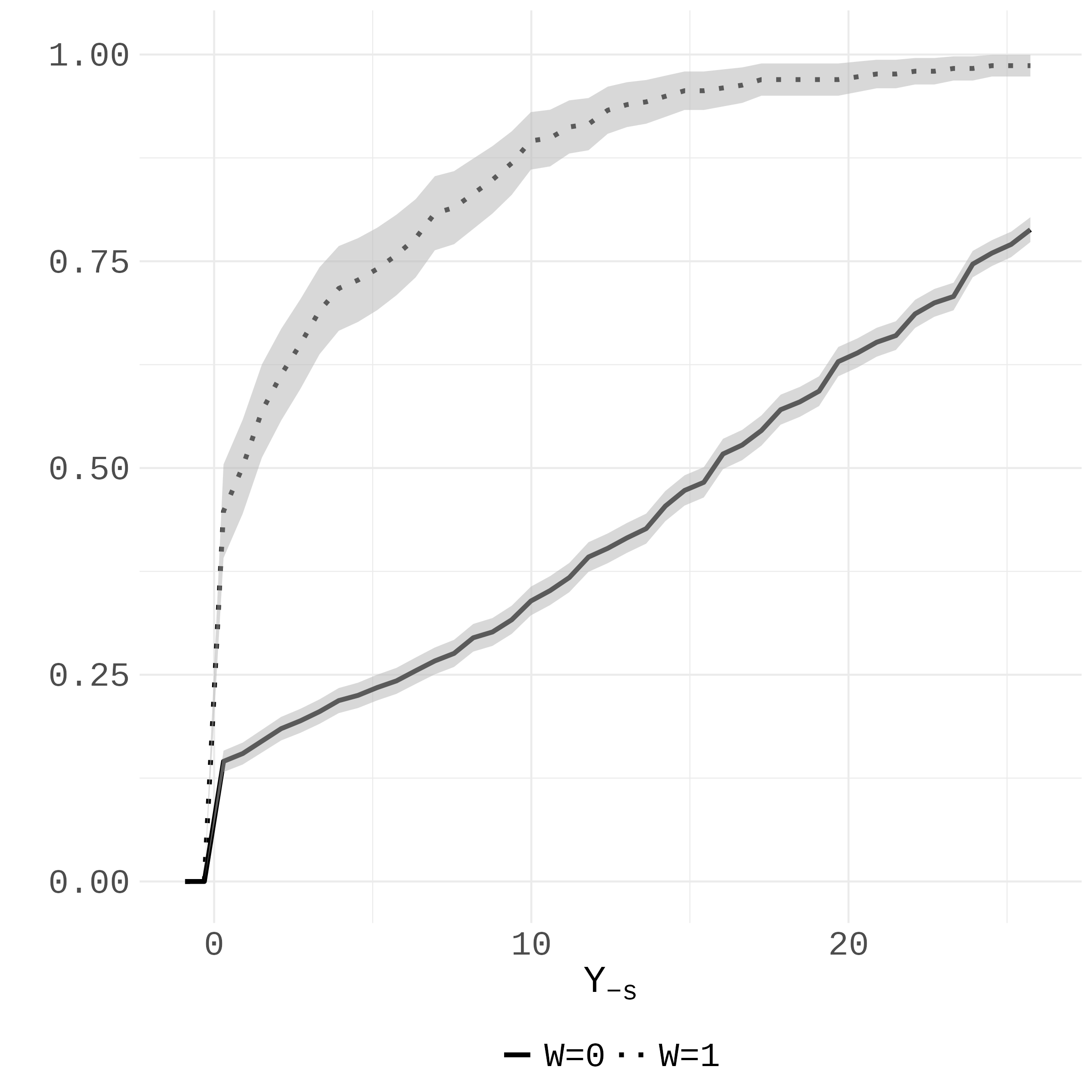}
&
\includegraphics[width=0.5\textwidth]{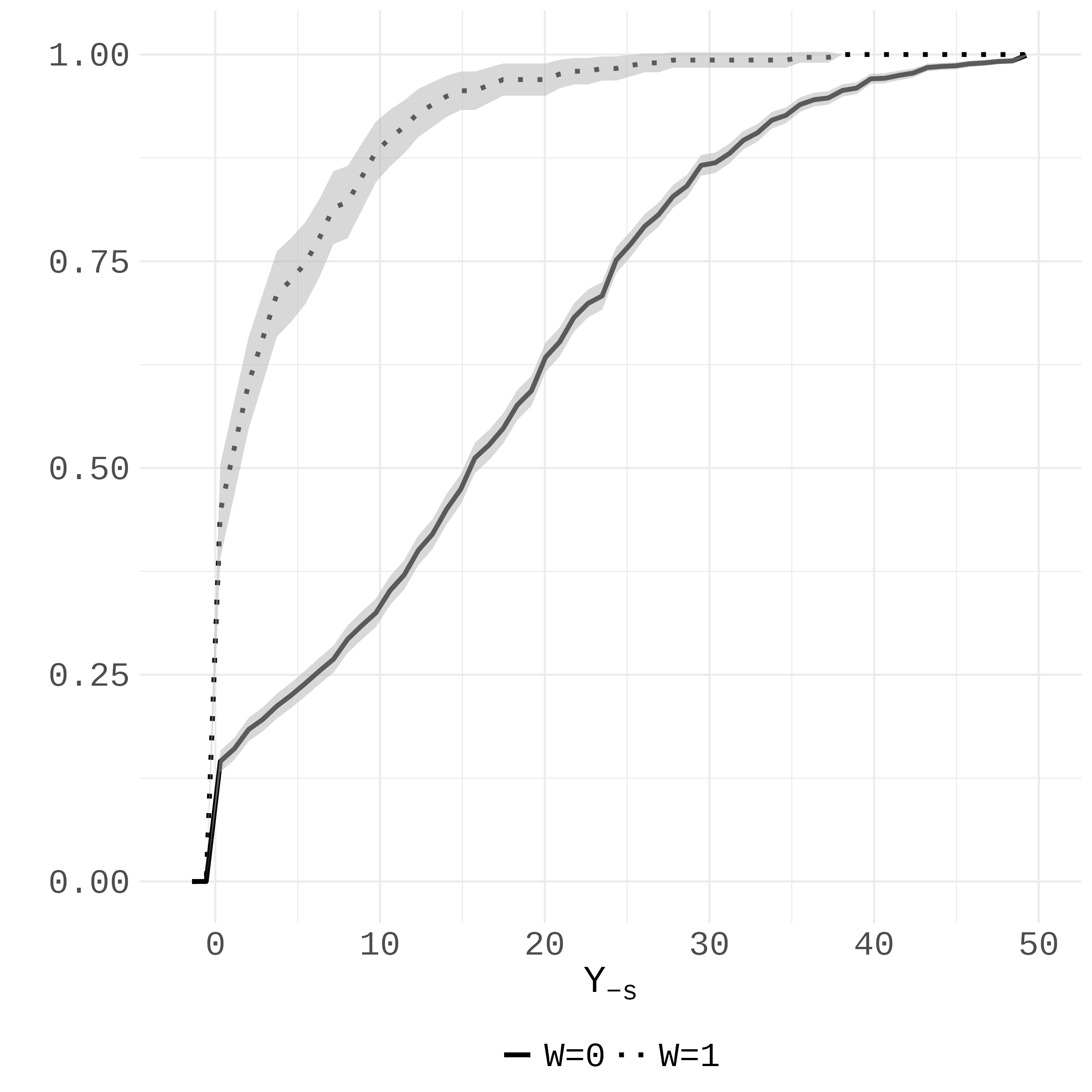}
\end{tabular}
\caption{Evidence of Assumption \ref{a:special} \eqref{a:special:sd} for the NSW program using the CPS (left) and PSID (right) data sets. The conditional cumulative distribution functions, $F_{Y_{-s}|W=0}$ and $F_{Y_{-s}|W=1}$, are estimated by their empirical counterparts, i.e., empirical CDFs. The estimates, along with their 95\% confidence intervals, are plotted. The horizontal axes are measured in thousands of U.S. dollars.}${}$
\label{fig:nsw:ass3}
\end{figure}
%%%%%%%%%%%%%%%%%%%%%%%%%%%%%%%%%%%%%%%
The solid line represents the estimates of $F_{Y_{-s}|W=0}$ while the dotted line represents those of $F_{Y_{-s}|W=1}$.
Their 95\% confidence intervals are indicated by the shaded regions.
The left figure is based on the CPS data set while the right one is based on the PSID data set.
For both of the two data sets, the figure clearly shows that $F_{Y_{-s}|W=0}$ first-order stochastically dominates $F_{Y_{-s}|W=1}$, providing evidence in support of our Assumption \ref{a:special} \eqref{a:special:sd}.

The third condition, Assumption \ref{a:special} \eqref{a:special:dec}, requres the function $y \mapsto \Phi(y) := E[Y_1-Y_0|W=0,Y_{-s}=y]$ to be weakly decreasing.
We estimate the conditional expectation function $\Phi$ non-parametrically by using the partitioning-based least squares regression -- see Footnote \ref{foot:nonparametric_estimation} for details.
The estimates, along with their 95\% confidence bands, are plotted in Figure \ref{fig:nsw:ass2}.
%%%%%%%%%%%%%%%%%%%%%%%%%%%%%%%%%%%%%%%
\begin{figure}[t]
\centering
CPS\\
\includegraphics[width=0.55\textwidth]{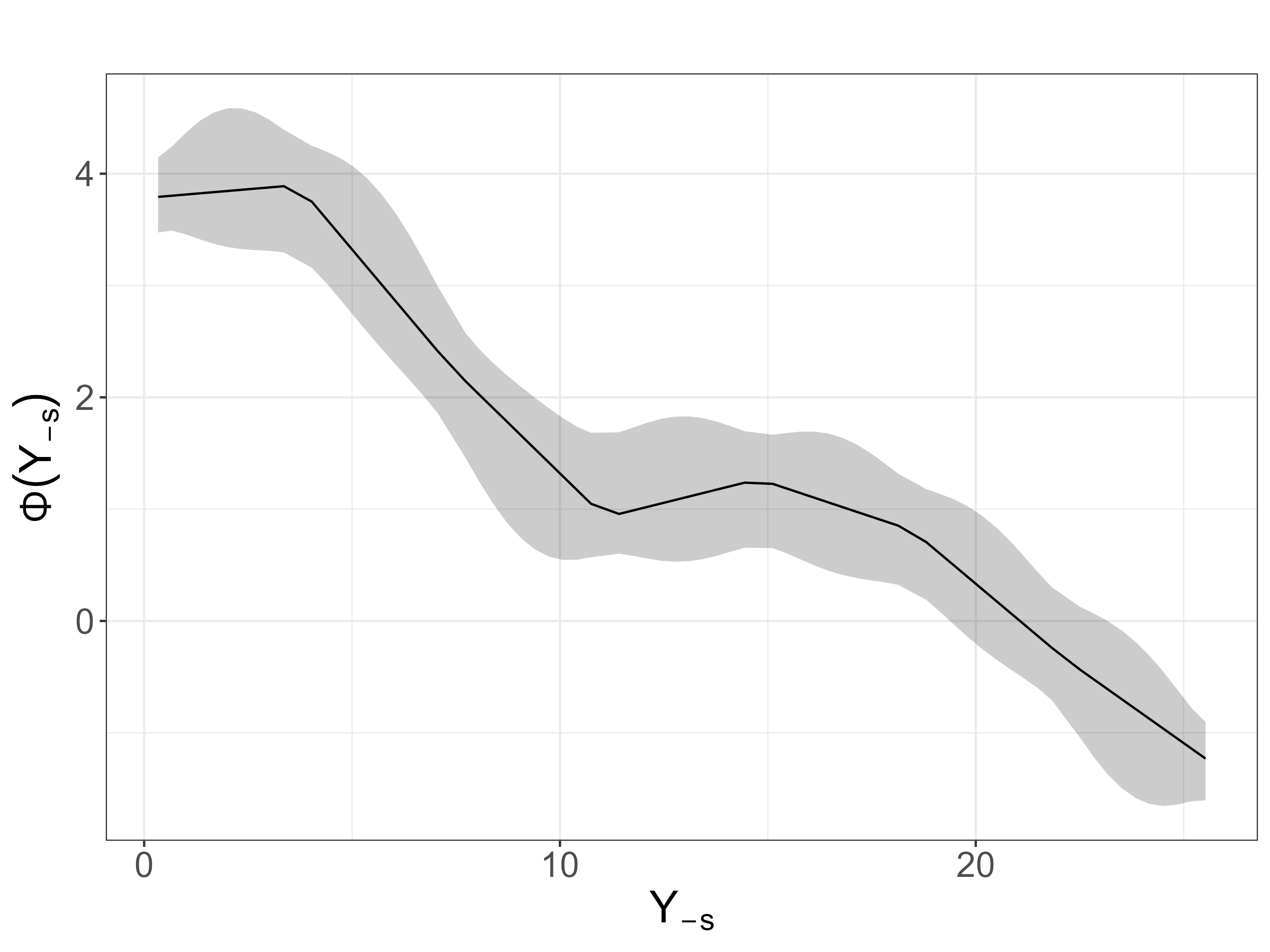}
\\
PSID\\
\includegraphics[width=0.55\textwidth]{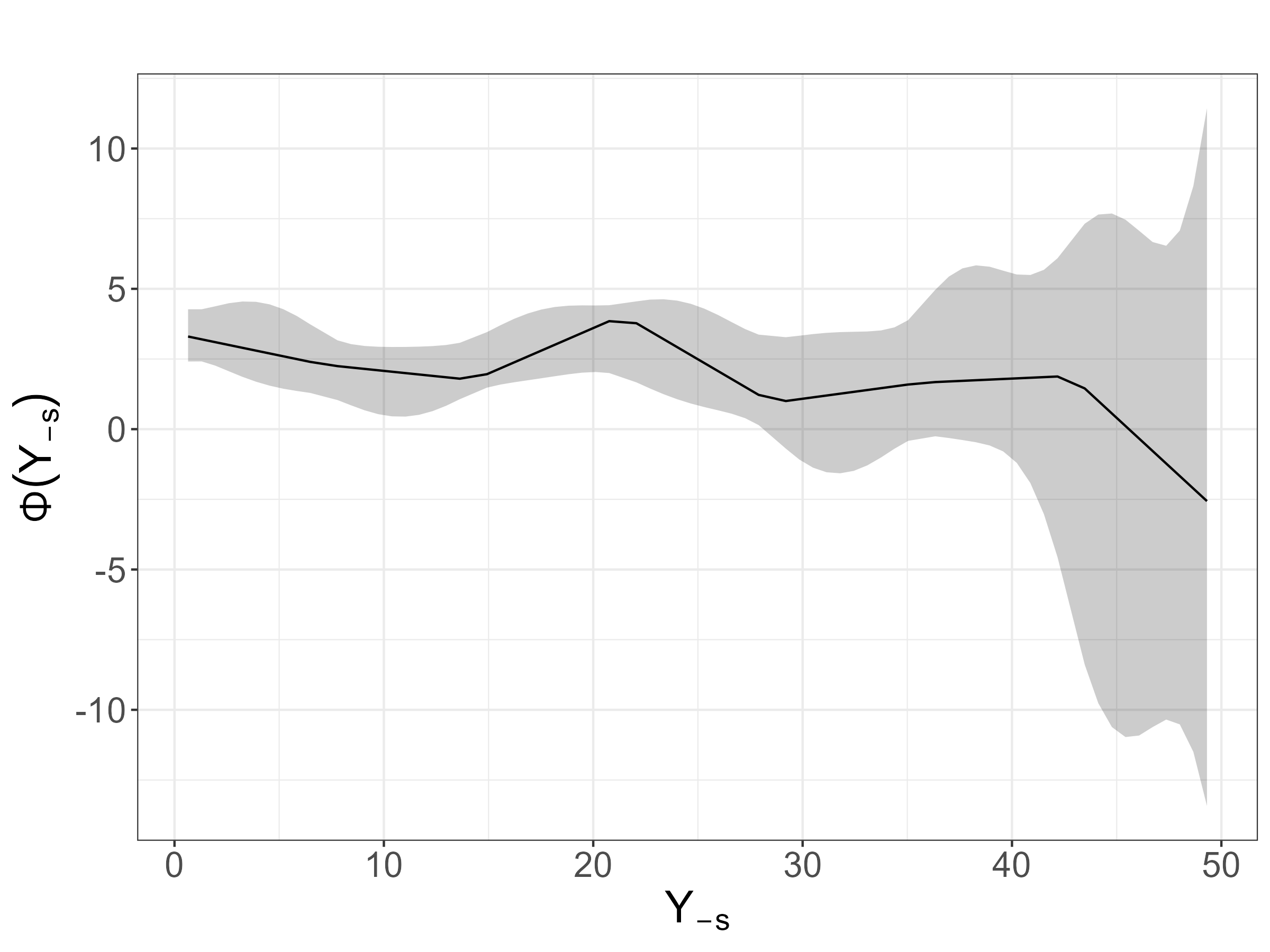}
\caption{Evidence of Assumption \ref{a:special} \eqref{a:special:dec} for the NSW program using the CPS (top) and PSID (bottom) data sets. The conditional expectation function $\Phi$ is estimated non-parametrically by the partitioning-based least squares regression. The estimates, along with their 95\% confidence bands, are plotted.  Both the vertical and horizontal axes are measured in thousands of U.S. dollars. For details on the estimation method, refer to Footnote \ref{foot:nonparametric_estimation}.}${}$
\label{fig:nsw:ass2}
\end{figure}
%%%%%%%%%%%%%%%%%%%%%%%%%%%%%%%%%%%%%%%
The top figure is based on the CPS data set while the bottom one is based on the PSID data set.
The confidence bands imply that the weak decreasingness of this function $\Phi$ cannot be refuted for either of the two data sets. 
Thus, they provide evidence in support of our Assumption \ref{a:special} \eqref{a:special:dec}.

In the current section, we do not include auxiliary covariates in the current analysis.
In Appendix \ref{sec:additional:nsw}, we provide further evidence in support of our assumption even after accounting for auxiliary covariates.

In summary, all three components, (i)--(iii), of Assumption \ref{a:special} hold fairly robustly for both the CPS and PSID data sets regardless of whether auxiliary covariates are included or not.
Hence, the double bracketing relationship $\Delta(\attm) \leq \Delta(\attdidm) \leq \Delta(\attdid)$, as predicted by our Theorem \ref{theorem:main}, is expected to hold, which we have empirically confirmed in Section \ref{sec:nsw}.

\subsection{Educational Programs}\label{sec:educ:assumption}
%%%%%%%%%%%%%%%%%%%%%%%%%%%%%%%%%%%%%%%

Recall that Section \ref{sec:educ} demonstrates that the double bracketing relationship, $\attm \leq \attdidm \leq \attdid$, is robust for the educational program investigated by \citet{athey2020combining}. 
In light of this observation and our theoretical prediction provided in Theorem \ref{theorem:main}, our next question is whether Assumption \ref{a:special} for the double bracketing theory is satisfied for this educational program. To address this, we will examine each of the three parts, (i)--(iii), of Assumption \ref{a:special} in detail using the data set utilized in Section \ref{sec:educ}.

Recall that the first condition is Assumption \ref{a:special} \eqref{a:special:selection}, requiring that $E[Y_0|W=0,Y_{-s}=y] \geq E[Y_0|W=1,Y_{-s}=y]$ holds for all $y$.
We estimate the conditional expectation functions, $y \mapsto E[Y_0|W=0,Y_{-s}=y]$ and $y \mapsto E[Y_0|W=1,Y_{-s}=y]$, non-parametrically using the partitioning-based least squares regression -- see Footnotes \ref{foot:nonparametric_estimation}--\ref{foot:noncentered} for details.
The estimates are plotted in Figure \ref{fig:educ:ass1}.
%%%%%%%%%%%%%%%%%%%%%%%%%%%%%%%%%%%%%%%
\begin{figure}[t]
\centering
\includegraphics[width=0.55\textwidth]{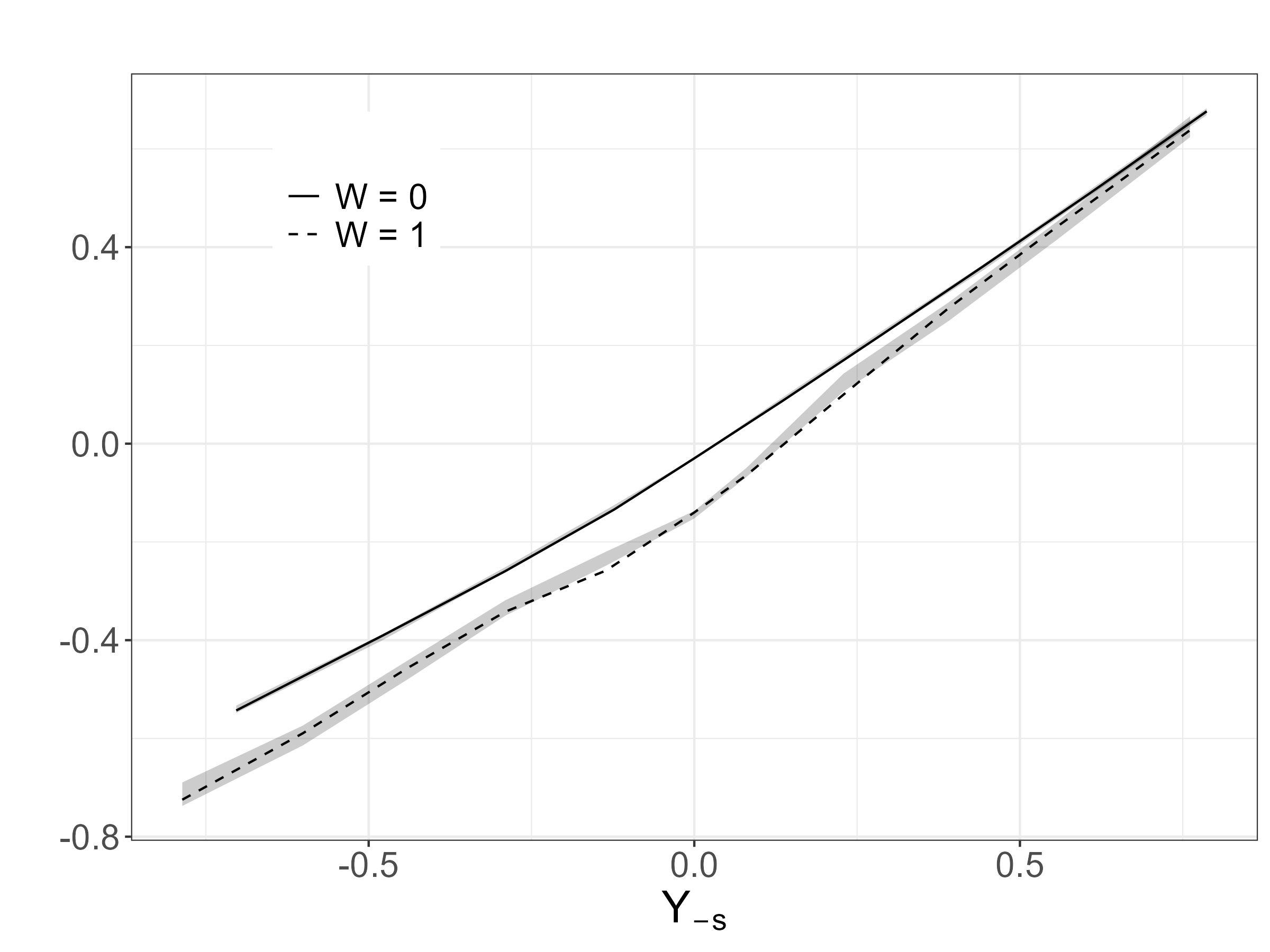}
\caption{Evidence of Assumption \ref{a:special} \eqref{a:special:selection} for the educational program. The solid and dashed lines represent estimates of the conditional expectation functions $y \mapsto E[Y_0|W=0,Y_{-s}=y]$ and $y \mapsto E[Y_0|W=1,Y_{-s}=y]$, respectively. Shaded areas denote 95\% confidence bands. For details on the estimation method, refer to Footnote \ref{foot:nonparametric_estimation}.}${}$
\label{fig:educ:ass1}
\end{figure}
%%%%%%%%%%%%%%%%%%%%%%%%%%%%%%%%%%%%%%%
The solid and dashed lines represent estimates of the conditional expectation functions $y \mapsto E[Y_0|W=0,Y_{-s}=y]$ and $y \mapsto E[Y_0|W=1,Y_{-s}=y]$, respectively.
Shaded areas denote their 95\% confidence bands, although they are nearly invisible due to the large sample size.\footnote{We remark that the estimates appear outside of the confidence bands because the bands are centered around bias-corrected estimates in the \texttt{lspartition} package -- see Footnote \ref{foot:noncentered}.}
This figure showcases that the inequality $E[Y_0|W=0,Y_{-s}=y] \geq E[Y_0|W=1,Y_{-s}=y]$ is satisfied for all $y$, providing evidence in support of our Assumption \ref{a:special} \eqref{a:special:selection}.

Next, recall that the second condition is Assumption \ref{a:special} \eqref{a:special:sd}, which requires that $F_{Y_{-s}|W=0}$ first-order stochastically dominates $F_{Y_{-s}|W=1}$.
We estimate the conditional cumulative distribution functions, $F_{Y_{-s}|W=0}$ and $F_{Y_{-s}|W=1}$, by their empirical counterparts, i.e., empirical CDFs conditionally on $W=0$ and $W=1$, respectively.
Their estimates are plotted in Figure \ref{fig:educ:ass3}.
The solid line indicates the estimates of $F_{Y_{-s}|W=0}$ while the dotted line indicates the estimates of  $F_{Y_{-s}|W=1}$.
Their 95\% confidence intervals are indicated by the shaded regions in colors, although they are nearly invisible due to the large sample size again.
This figure demonstrates that $F_{Y_{-s}|W=0}$ indeed first-order stochastically dominates $F_{Y_{-s}|W=1}$, providing evidence in support of Assumption \ref{a:special} \eqref{a:special:sd}.
%%%%%%%%%%%%%%%%%%%%%%%%%%%%%%%%%%%%%%%
\begin{figure}[t]
\centering
\includegraphics[width=0.5\textwidth]{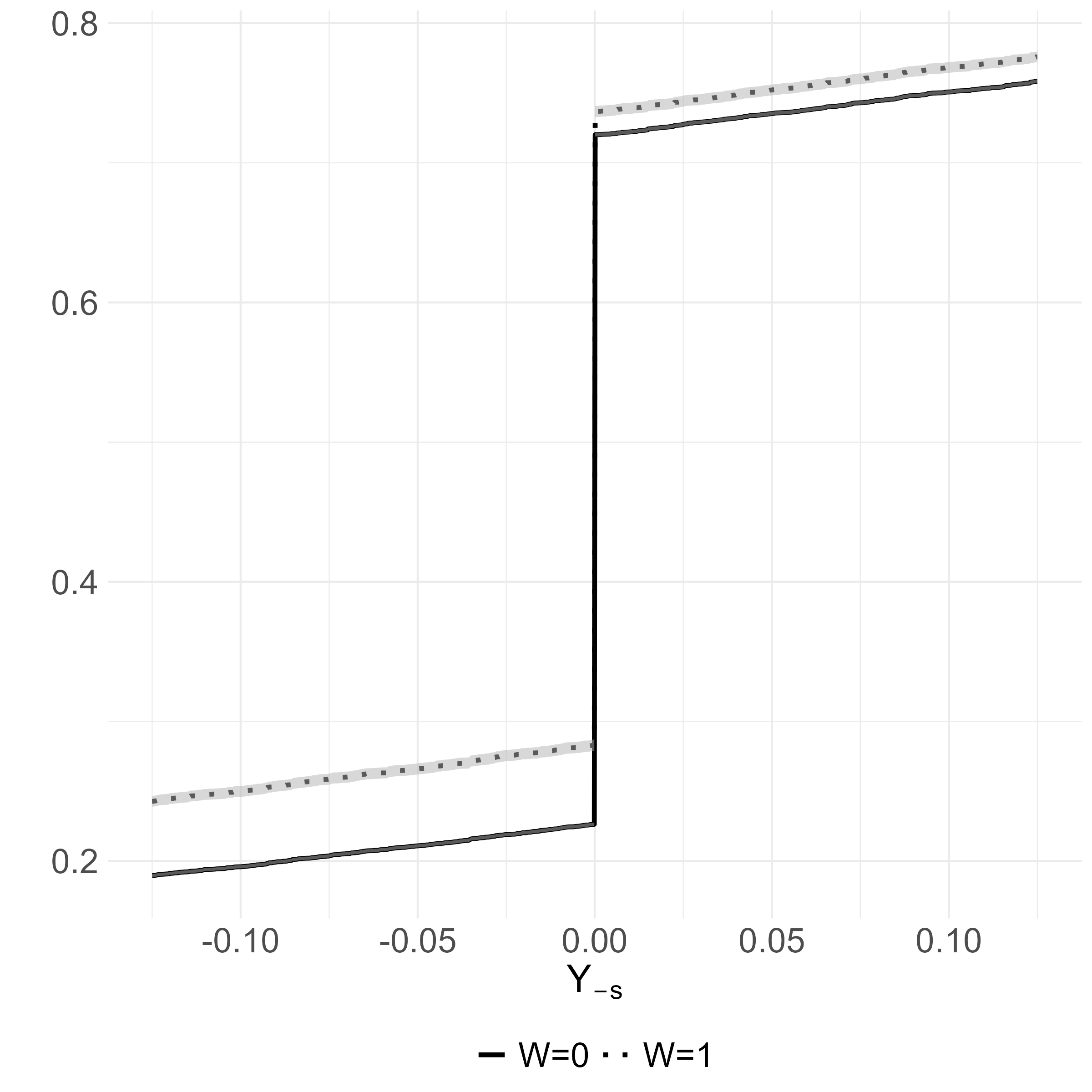}
\caption{Evidence of Assumption \ref{a:special} \eqref{a:special:dec} for the educational program. The conditional expectation function $\Phi$ is non-parametrically estimated by the Nadaraya-Watson estimator. The estimates, along with their 95\% confidence intervals, are plotted.  }${}$
\label{fig:educ:ass3}
\end{figure}
%%%%%%%%%%%%%%%%%%%%%%%%%%%%%%%%%%%%%%%

The third condition is Assumption \ref{a:special} \eqref{a:special:dec}, which requires $y \mapsto \Phi(y) := E[Y_1-Y_0|W=0,Y_{-s}=y]$ to be weakly decreasing.
We estimate the conditional expectation function $\Phi$ non-parametrically by using the partitioning-based least squares regression -- see Footnote \ref{foot:nonparametric_estimation} for details.
The estimates, along with their 95\% confidence bands, are plotted in Figure \ref{fig:educ:ass2}.
The plot shows that this function $\Phi$, is indeed non-increasing, providing evidence in support of of our Assumption \ref{a:special} \eqref{a:special:dec}.

Our analysis in the current section omits auxiliary covariates.
In Appendix \ref{sec:additional:educ}, we provide further evidence in support of our assumption even after accounting for auxiliary covariates.
In particular, accounting for the covariates will strengthen the empirical support of our Assumption \ref{a:special} \eqref{a:special:selection} as mentioned above.

In summary, all the three components, (i)--(iii), of Assumption \ref{a:special} hold robustly for the educational program regardless of whether auxiliary covariates are included or not.
Hence, the double bracketing relationship $\Delta(\attm) \leq \Delta(\attdidm) \leq \Delta(\attdid)$, as predicted by our Theorem \ref{theorem:main}, is expected to hold, as we empirically confirmed in Section \ref{sec:educ}.

%%%%%%%%%%%%%%%%%%%%%%%%%%%%%%%%%%%%%%%
\begin{figure}[t]
\centering
\includegraphics[width=0.55\textwidth]{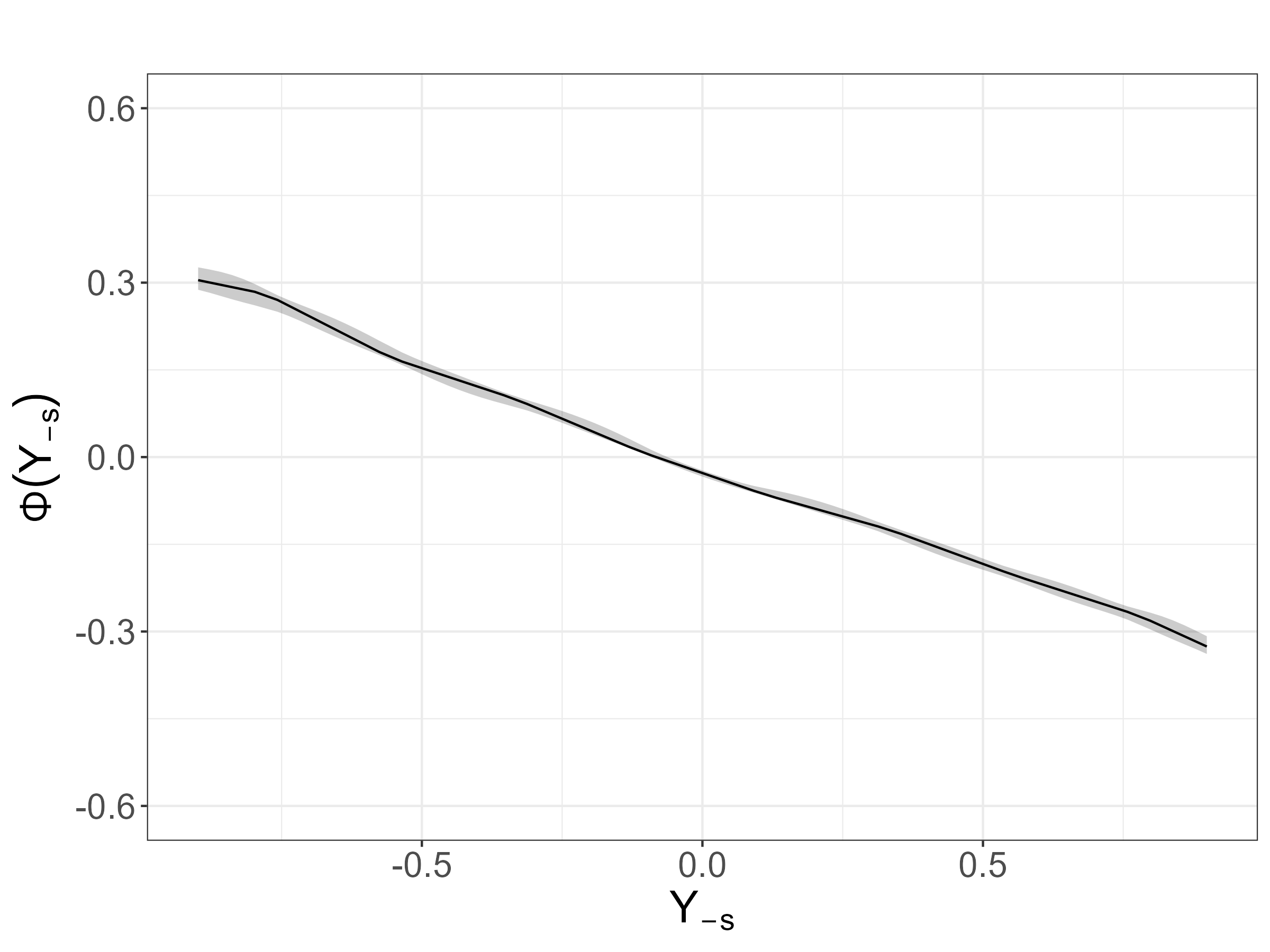}
\caption{Evidence of Assumption \ref{a:special} \eqref{a:special:dec} for the educational program. The conditional expectation function $\Phi$ is non-parametrically estimated by the Nadaraya-Watson estimator. The estimates, along with their 95\% confidence intervals, are plotted.}${}$
\label{fig:educ:ass2}
\end{figure}
%%%%%%%%%%%%%%%%%%%%%%%%%%%%%%%%%%%%%%%

%%%%%%%%%%%%%%%%%%%%%%%%%%%%%%%%%%%%%%%
\section{Summary and Discussions}
%%%%%%%%%%%%%%%%%%%%%%%%%%%%%%%%%%%%%%%
The paper evaluates the relative performance of three estimands -- Matching (M), Difference-in-Differences (DID), and a hybrid method (DIDM) -- for estimating causal effects in observational studies, particularly in the context of job training and educational programs. Our analysis reveals a consistent inequality: \( \attm \leq \att \leq \attdid \). This indicates that Matching tends to produce the most conservative estimates, while DID often yields the most optimistic ones. When selecting a single method, it may be prudent to favor the more conservative Matching estimator. If the Matching estimator suggests a positive effect, this provides a compelling argument that the causal effect is indeed likely positive, leading to a more robust conclusion.

Moreover, these estimands can be utilized in a complementary manner. If practitioners believe that any one of the three identification assumptions -- unconfoundedness (for M), parallel trends (for DID), or conditional parallel trends (for DIDM) -- holds, then the true causal effect is bracketed by \( \attm \leq \att \leq \attdid \). This approach offers a ``triply robust bracketing'' of the causal effect, providing valuable information regardless of which assumption is valid. By applying this framework, researchers can gain a clearer understanding of the potential range of treatment effects, with Matching yielding the most conservative estimate and DID providing the most optimistic one.

Furthermore, the results provide an intuitive strategy for managing uncertainty in identification assumptions. If the Matching (M) estimator indicates a positive effect, the sign of the treatment effect can be interpreted with confidence. The Difference-in-Differences (DID) estimator then provides an upper bound for the magnitude of this effect. This approach simplifies the analysis compared to traditional sensitivity methods (e.g., Manski and Pepper 2018), offering more interpretable bounds for researchers.

\newpage 
\appendix
\section*{Appendix}

%%%%%%%%%%%%%%%%%%%%%%%%%%%%%%%%%%%%%%%
\section{Details of Data}
%%%%%%%%%%%%%%%%%%%%%%%%%%%%%%%%%%%%%%%

This appendix section provides details of the data used in Sections \ref{sec:empirical_double_bracketing} and \ref{sec:empirical_assumption}.

%%%%%%%%%%%%%%%%%%%%%%%%%%%%%%%%%%%%%%%
\subsection{Details of the Data Used in Section \ref{sec:nsw}}\label{sec:appendix:nsw}
%%%%%%%%%%%%%%%%%%%%%%%%%%%%%%%%%%%%%%%
\subsubsection{Data Description}
%%%%%%%%%%%%%%%%%%%%%%%%%%%%%%%%%%%%%%%

Our primary dataset originates from the National Supported Work (NSW) Demonstration, a transitional subsidized work experience program that operated for four years across 15 locations in the United States. This initiative specifically targeted four distinct groups: female long-term AFDC recipients, former drug addicts, ex-offenders, and young school dropouts. Approximately 10,000 individuals took part in the program, each engaging in 12 to 18 months of employment.

The NSW program aimed to assist individuals who faced significant barriers to employment. It provided a structured training environment initially, followed by support in securing regular employment. To ensure the program reached those in genuine need, participants were required to be currently unemployed and to have limited recent employment experience, highlighting the program’s focus on individuals with considerable employment challenges.

A standout feature of the NSW program was its experimental design, which included a randomized control trial at 10 locations between April 1975 and August 1977. In this trial, 6,616 participants were randomly assigned to either a treatment group, which received the program services, or a control group, which did not. Data collection involved a retrospective baseline interview and four follow-up interviews, covering two years before random assignment and up to 36 months afterward. The dataset provides comprehensive information on demographics, employment history, job search behavior, mobility, household income, housing, and drug use.

%%%%%%%%%%%%%%%%%%%%%%%%%%%%%%%%%%%%%%%
\subsubsection{Key Variables}
%%%%%%%%%%%%%%%%%%%%%%%%%%%%%%%%%%%%%%%

In our analysis, we concentrate on the following variables, which are consistent across both the experimental and non-experimental datasets.

The primary outcome of interest is $Y_t$, representing the participants' self-reported earnings. Specifically, we analyze real earnings adjusted to 1982 dollars, in line with the methodology established by \citet{lalonde1986evaluating}.
Next, $W$ serves as a binary indicator of treatment, denoting whether an individual was assigned to the NSW program.
Additionally, we incorporate demographic variables such as race, and education level (indicating high school dropout status) as auxiliary covariates commonly used in M, DIDM, and DID.

The straightforward mean-difference estimate of the NSW program's impact on male participants within the experimental sample is \$886, a figure that is statistically significant at the 10 percent level. This result aligns with the economic disadvantages faced by the NSW population, as evidenced by their low pre-program earnings and the observed decline in earnings from 1974 to 1975, a phenomenon widely recognized in the literature as ``Ashenfelter's dip,'' as discussed in Section \ref{sec:lagged_outcome}.

%\yechany{We might need to be careful of the wordings below, not to make Imbens group(imbens is one of dehejia's advisor) angry}
We chose not to utilize the Dehejia-Wahba (DW) dataset from their 1999 and 2002 studies in our analysis, based on several critical considerations. First, \citet{smith2005does} and \citet{dehejia2005practical} have debated the validity of the DW dataset, particularly questioning its representativeness and the potential biases introduced by the sample restrictions employed. DW exclude approximately 40 percent of the original LaLonde (1986) sample in order to include two years of pre-program earnings data in their model of program participation. This exclusion results in lower mean earnings in 1974 and 1975 for the DW sample compared to the larger LaLonde sample, leading to a significantly different and larger experimental impact estimate of \$1,794, which is more than double that of the LaLonde sample.

Additionally, the data we obtained from the authors of \citet{heckman1998_2matching} includes all pretreatment earnings outcomes, even for the subsample omitted in the DW dataset. This comprehensive dataset removes the need to impose arbitrary sample restrictions, which could otherwise increase sampling uncertainty and potentially bias the results. By employing the full sample, we ensure a broader and more representative analysis, thus upholding the principles of internal and external validity.

%%%%%%%%%%%%%%%%%%%%%%%%%%%%%%%%%%%
\subsection{Details of the Data Used in Section \ref{sec:jtpa}}\label{sec:appendix:jtpa}
%%%%%%%%%%%%%%%%%%%%%%%%%%%%%%%%%%%

%%%%%%%%%%%%%%%%%%%%%%%%%%%%%%%%%%%
\subsubsection{Data Description}
%%%%%%%%%%%%%%%%%%%%%%%%%%%%%%%%%%%
We closely follow the description laid out in \cite{heckman1998characterizing}.
Our primary dataset originates from a randomized evaluation of the Job Training Partnership Act (JTPA) program, conducted across four training centers in the United States. The JTPA program aimed to provide job training and employment services to economically disadvantaged individuals, dislocated workers, and others who faced significant barriers to employment.

The dataset includes information on both experimental treatment and control groups, as well as a non-experimental comparison group of eligible nonparticipants (ENPs) who were located in the same labor markets but chose not to participate in the program at the time of random assignment. Random assignment occurred when individuals applied and were accepted into the JTPA program, ensuring that participants were comparable at the baseline. Control group members were excluded from receiving JTPA services for 18 months after random assignment.

The data collection involved comprehensive surveys administered to all groups, including the ENPs. These surveys captured detailed retrospective information on labor force participation, job spells, earnings, marital status, and other demographic characteristics. In this analysis, we focus on a sample of adult males aged 22 to 54, following \cite{heckman1998characterizing}.

%%%%%%%%%%%%%%%%%%%%%%%%%%%%%%%%%%%
\subsubsection{Key Variables}
%%%%%%%%%%%%%%%%%%%%%%%%%%%%%%%%%%%

In our analysis, we concentrate on the following key variables, which are consistent across both the experimental and non-experimental datasets.

The primary outcome of interest is $Y_t$, representing the participants' earnings. Specifically, we analyze real earnings over a specific period, adjusting for inflation where necessary. The variable $W$ serves as a binary indicator of treatment, denoting whether an individual was assigned to the JTPA program.

Additionally, we include demographic covariates such as sex and age, which are commonly utilized in various econometric models like Matching (M), Difference-in-Differences Matching (DIDM), and Difference-in-Differences (DID). 

\subsection{Details of the Data Used in Section \ref{sec:educ}}\label{sec:appendix:educ}
%%%%%%%%%%%%%%%%%%%%%%%%%%%%%%%%%%%%%%%
\subsubsection{Data Description}\label{sec:data}
%%%%%%%%%%%%%%%%%%%%%%%%%%%%%%%%%%%%%%%

Our primary observational data come from the administrative records of a large urban school district. The dataset includes information on approximately two million children in grades 3 through 8, covering those born between 1966 and 2001.

This dataset encompasses around 15 million test scores in English language arts and math. Due to changes in the testing regime over the past 20 years—such as the transition from district-specific to statewide tests and variations in test timing--we have normalized the test scores by year and grade to have a mean of zero and a standard deviation of one, following established research practices \citep[e.g.,][]{staiger2010searching}. This normalization ensures comparability with other samples across the nation. We also imputed missing test scores using cohort-specific means based on year of birth to account for cohort-level heterogeneity.

%Importantly, following the spirit and methodology of \cite{lalonde1986evaluating} and \cite{heckman1998characterizing}, we use the experimental data from the seminal Project STAR experiment as the baseline truth for $\att$. This same dataset combination has been used in \cite{athey2020combining}. 

%%%%%%%%%%%%%%%%%%%%%%%%%%%%%%%%%%%%%%%
\subsubsection{Key Variables}
%%%%%%%%%%%%%%%%%%%%%%%%%%%%%%%%%%%%%%%

We focus on the following variables in our analysis. The primary outcome of interest is \textbf{$Y_t$}, representing students' test scores, specifically standardized scores that average results from both mathematics and English language arts.

Secondly, \textbf{$W$} is a binary indicator denoting treatment, which in this context refers to the assignment to a small class size.

Lastly, we use gender, race, and eligibility for free lunch to define subpopulations for further analysis.

%%%%%%%%%%%%%%%%%%%%%%%%%%%%%%%%%%%%%%%
\section{Additional Empirical Evidence of the Assumptions}
%%%%%%%%%%%%%%%%%%%%%%%%%%%%%%%%%%%%%%%

In Section \ref{sec:empirical_assumption} in the main text, we examine Assumption \ref{a:special} for each empirical application, focusing on the primary variables without considering auxiliary covariates. In the current appendix section, we provide additional empirical evidence for Assumption \ref{a:special}, now accounting for the auxiliary covariates that were omitted in the main text. 

To incorporate these covariates with minimal shape restrictions, we employ partial linear models for the relevant conditional expectation functions involved in Assumption \ref{a:special}, where the relationship between the main variables is allowed to be non-parametric while the auxiliary variables appear linearly. This approach allows us to use the Nadaraya-Watson estimation after partialing out the auxiliary covariates from both the dependent and main independent variables, similar to the approach used in residual regressions.

%%%%%%%%%%%%%%%%%%%%%%%%%%%%%%%%%%%%%%%
\subsection{Additional Empirical Analyses for Section \ref{sec:nsw}}\label{sec:additional:nsw}
%%%%%%%%%%%%%%%%%%%%%%%%%%%%%%%%%%%%%%%

Figure \ref{fig:nsw:ass1_resid} presents the counterparts of Figure \ref{fig:nsw:ass1}, after partialing out auxiliary covariates.
Observe that the required inequality $E[Y_0|W=0,Y_{-s}=y] \ge E[Y_0|W=1,Y_{-s}=y]$ is still satisfied both for the CPS and PSID data sets, providing robust evidence in support of our Assumption \ref{a:special} \eqref{a:special:selection} even after accounting for the auxiliary covariates.
%%%%%%%%%%%%%%%%%%%%%%%%%%%%%%%%%%%%%%%
\begin{figure}[t]
\centering
CPS\\
\includegraphics[width=0.5\textwidth]{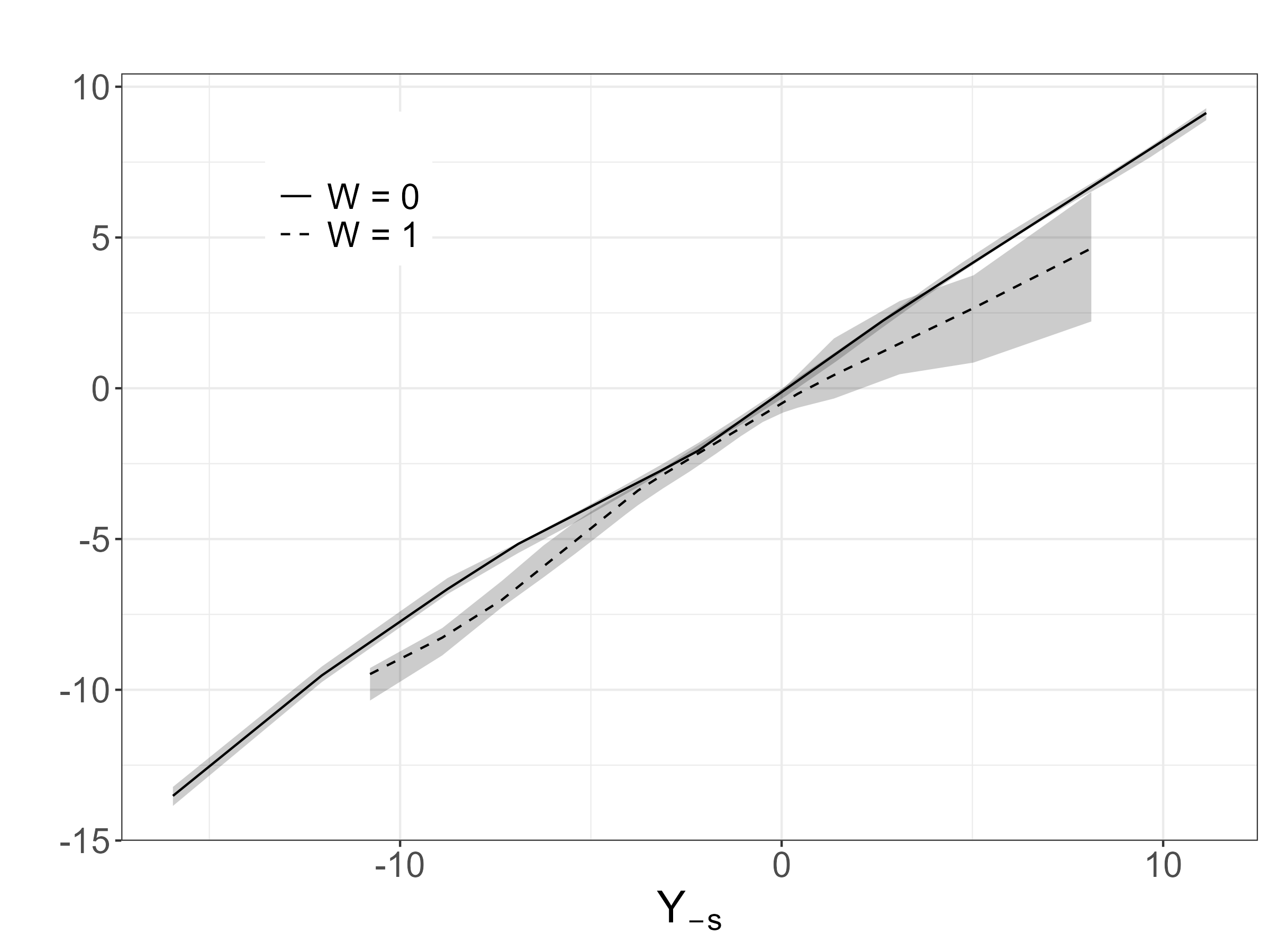}
\\
PSID\\
\includegraphics[width=0.5\textwidth]{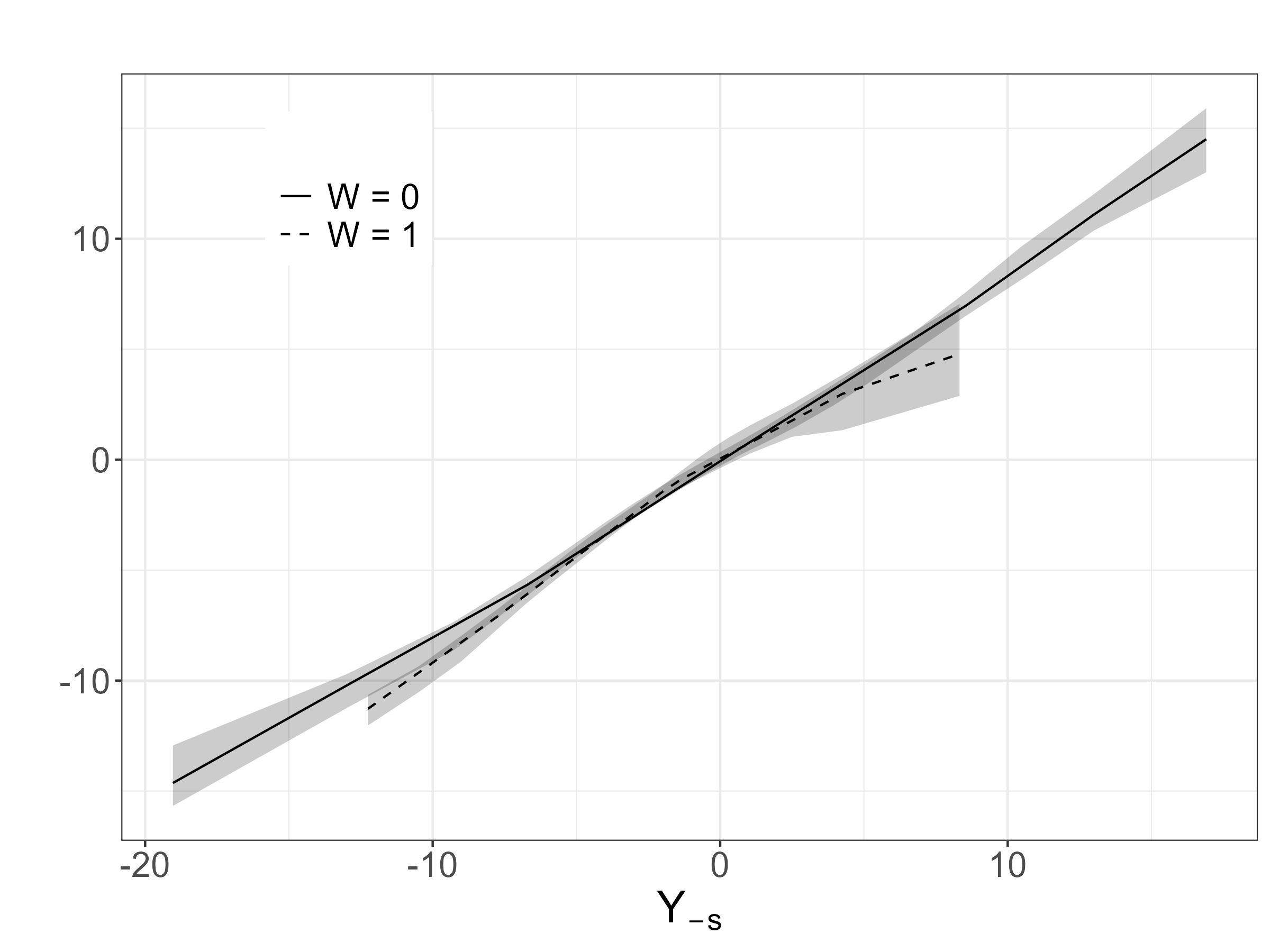}
\caption{Evidence of Assumption \ref{a:special} \eqref{a:special:selection} for the NSW program using the CPS (top) and PSID (bottom) data sets. The solid and dashed lines represent estimates of  the conditional expectation functions $y \mapsto E[Y_0|W=0,Y_{-s}=y]$ and $y \mapsto E[Y_0|W=1,Y_{-s}=y]$, respectively. Shaded areas denote 95\% confidence bands. Both axes are measured in thousands of U.S. dollars. For details on the estimation method, refer to Footnote \ref{foot:nonparametric_estimation}.}${}$
\label{fig:nsw:ass1_resid}
\end{figure}
%%%%%%%%%%%%%%%%%%%%%%%%%%%%%%%%%%%%%%%

Figure \ref{fig:nsw:ass2_resid} presents the counterparts of Figure \ref{fig:nsw:ass2}, after partialing out auxiliary covariates.
Observe that the regression curves are non-increasing for both the CPS and PSID data sets, providing evidence in support of our Assumption \ref{a:special} \eqref{a:special:dec} even after accounting for the auxiliary covariates.
%%%%%%%%%%%%%%%%%%%%%%%%%%%%%%%%%%%%%%%
\begin{figure}[t]
\centering
CPS\\
\includegraphics[width=0.5\textwidth]{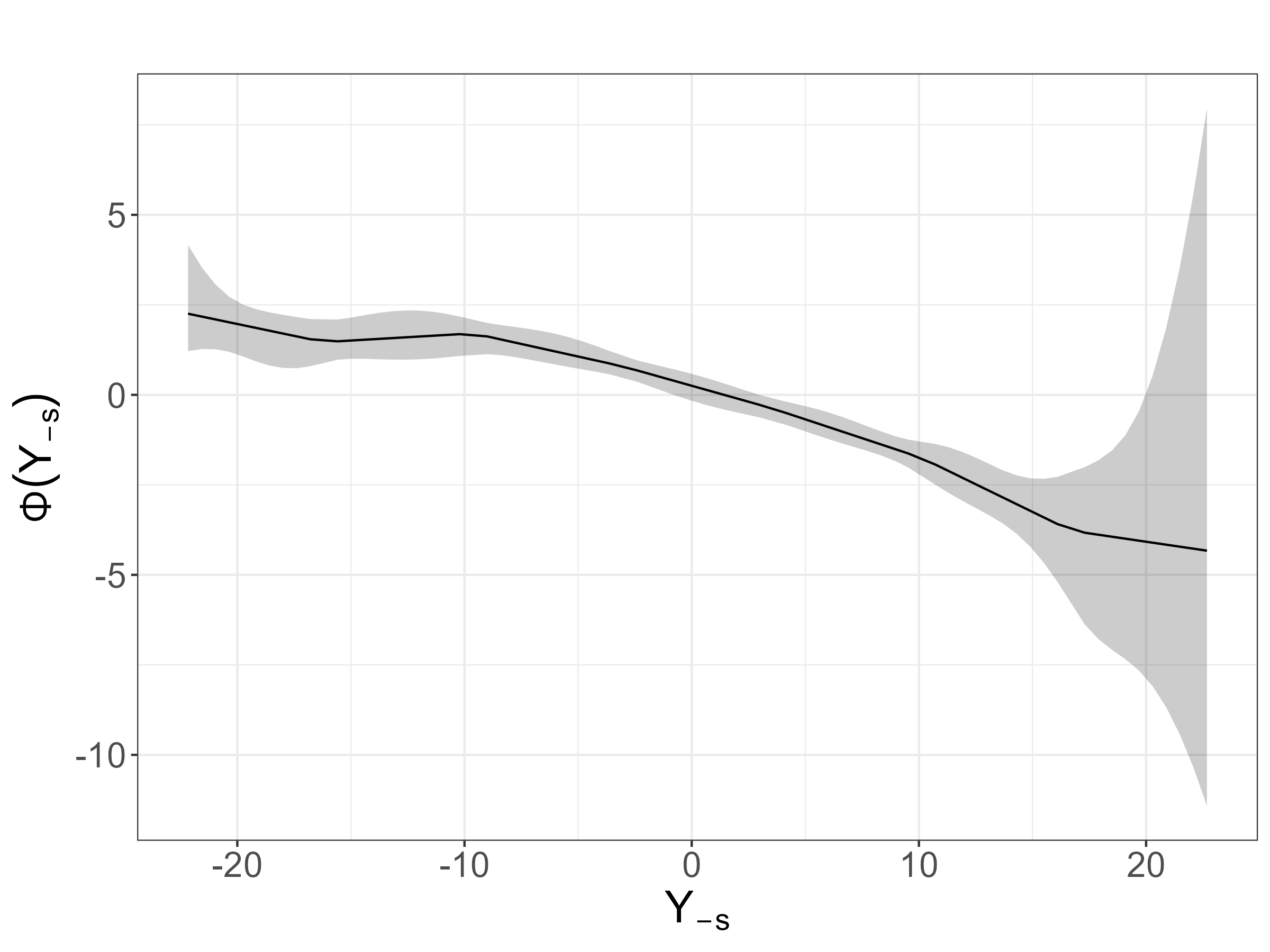}
\\
PSID\\
\includegraphics[width=0.5\textwidth]{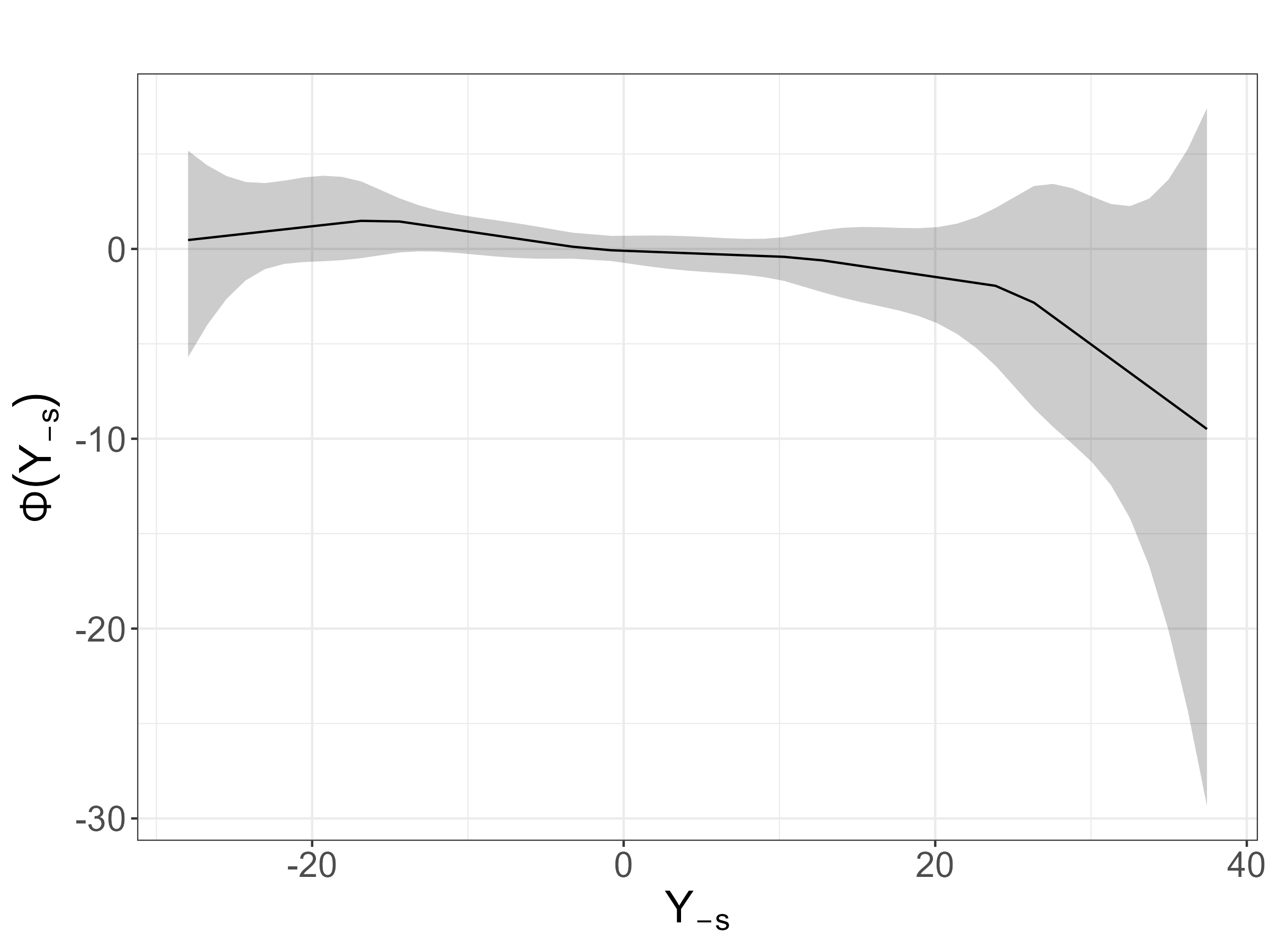}
\caption{Evidence of Assumption \ref{a:special} \eqref{a:special:dec} for the NSW program using the CPS (top) and PSID (bottom) data sets. The conditional expectation function $\Phi$ is estimated non-parametrically by the Nadaraya-Watson estimator after partialing out auxiliary covariates. The estimates, along with their 95\% confidence intervals, are plotted.  Both the vertical and horizontal axes are measured in thousands of U.S. dollars.}${}$
\label{fig:nsw:ass2_resid}
\end{figure}

\newpage${}$\newpage${}$\newpage

%%%%%%%%%%%%%%%%%%%%%%%%%%%%%%%%%%%%%%%
\subsection{Additional Empirical Analyses for Section \ref{sec:educ}}\label{sec:additional:educ}
%%%%%%%%%%%%%%%%%%%%%%%%%%%%%%%%%%%%%%%

Figure \ref{fig:educ:ass1_resid} presents the counterparts of Figure \ref{fig:educ:ass1}, after partialing out auxiliary covariates.
Observe that the regression estimates are non-negative, providing robust evidence in support of our Assumption \ref{a:special} \eqref{a:special:selection} even after accounting for the auxiliary covariates.
%%%%%%%%%%%%%%%%%%%%%%%%%%%%%%%%%%%%%%%
\begin{figure}[t]
\centering
\includegraphics[width=0.5\textwidth]{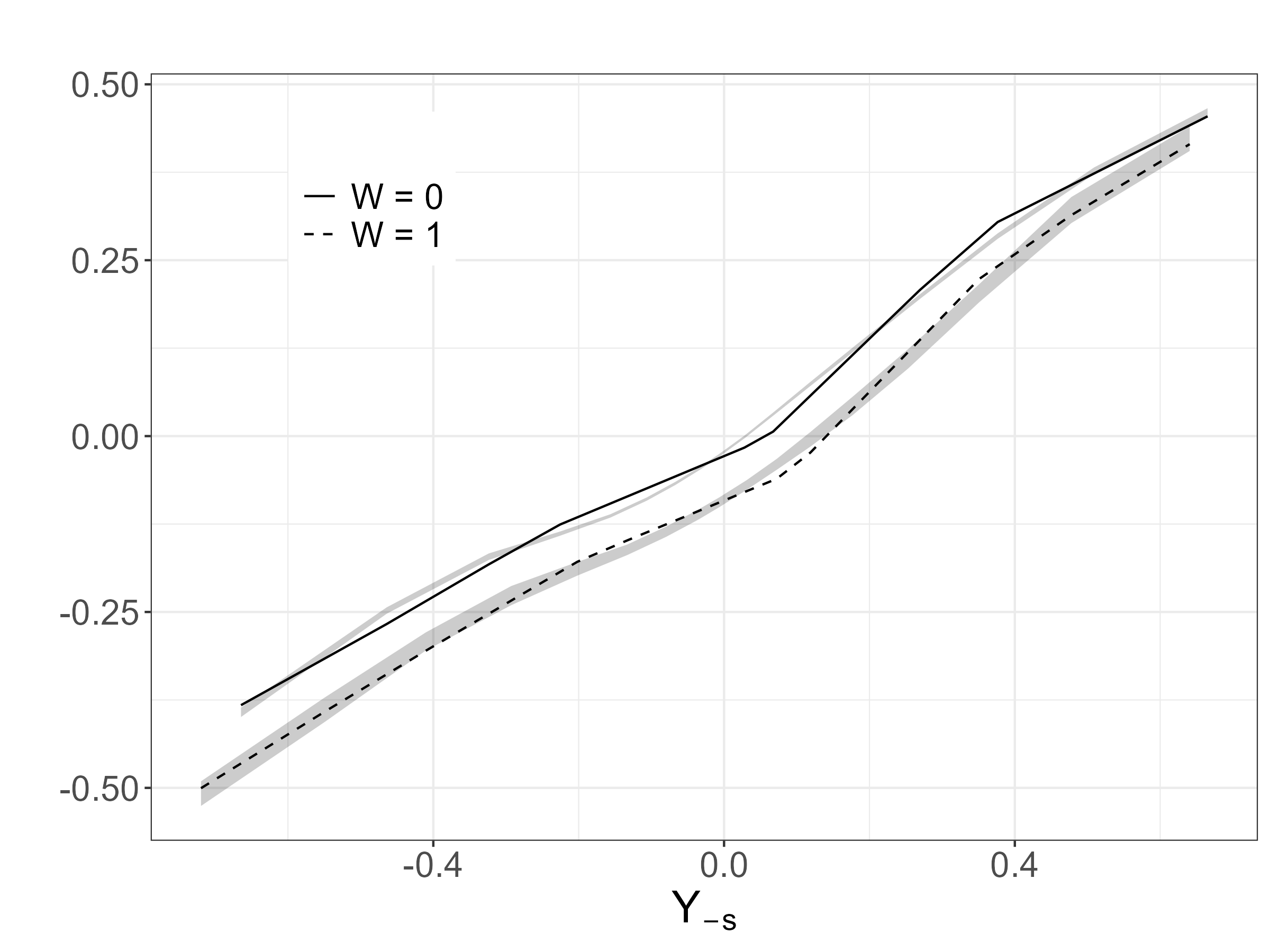}
\caption{Evidence of Assumption \ref{a:special} \eqref{a:special:selection} for the educational program. The solid and dashed lines represent estimates of  the conditional expectation functions $y \mapsto E[Y_0|W=0,Y_{-s}=y]$ and $y \mapsto E[Y_0|W=1,Y_{-s}=y]$, respectively. Shaded areas denote 95\% confidence bands. For details on the estimation method, refer to Footnote \ref{foot:nonparametric_estimation}.}${}$
\label{fig:educ:ass1_resid}
\end{figure}
%%%%%%%%%%%%%%%%%%%%%%%%%%%%%%%%%%%%%%%

Figure \ref{fig:educ:ass2_resid} presents the counterparts of Figure \ref{fig:educ:ass2}, after partialing out auxiliary covariates.
Observe that the regression curves are non-increasing, providing evidence in support of our Assumption \ref{a:special} \eqref{a:special:dec} even after accounting for the auxiliary covariates.
%%%%%%%%%%%%%%%%%%%%%%%%%%%%%%%%%%%%%%%
\begin{figure}[t]
\centering
\includegraphics[width=0.5\textwidth]{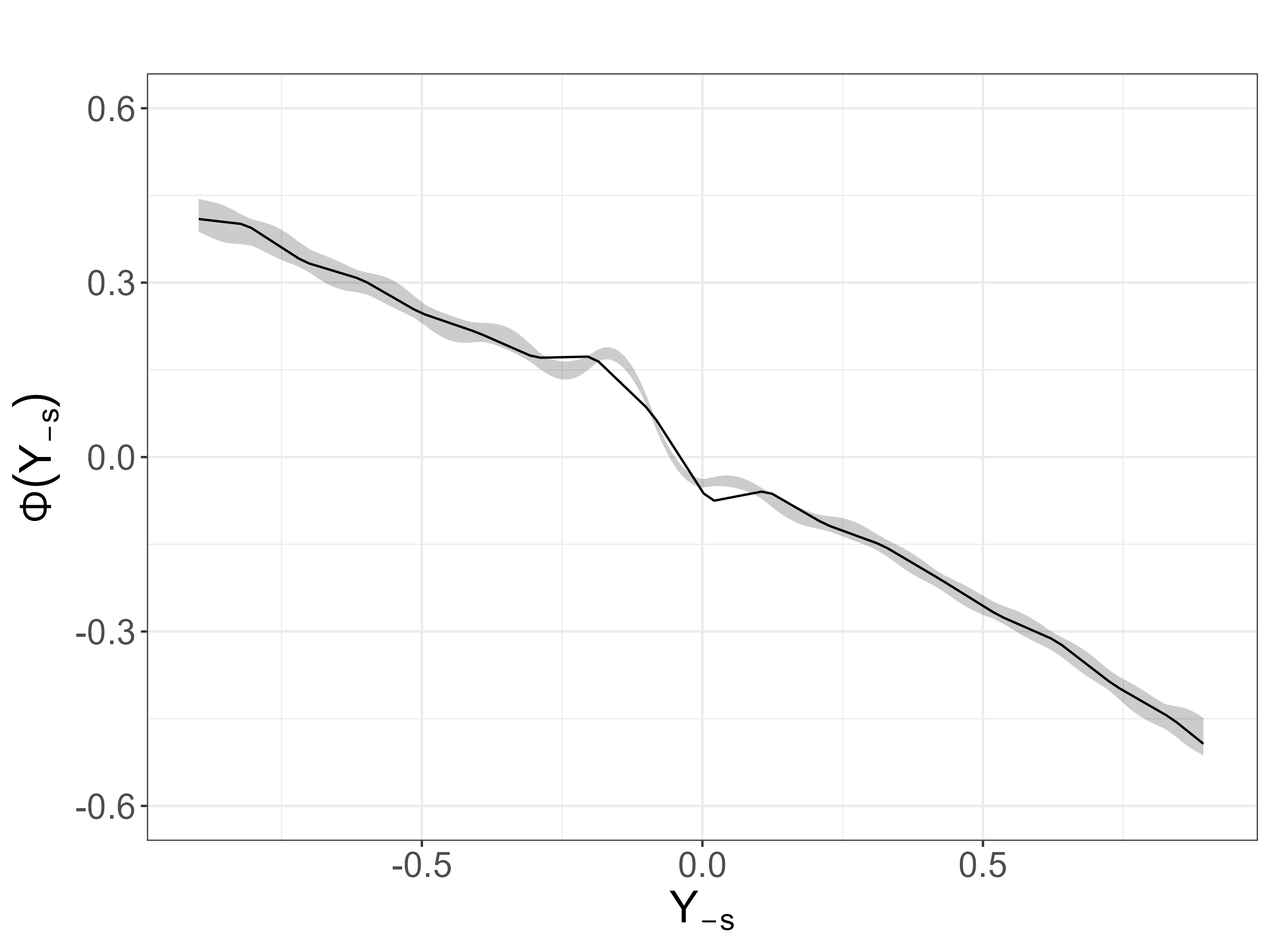}
\caption{Evidence of Assumption \ref{a:special} \eqref{a:special:dec} for the educational program. The conditional expectation function $\Phi$ is non-parametrically estimated by the Nadaraya-Watson estimator. The estimates, along with their 95\% confidence intervals, are plotted.}${}$
\label{fig:educ:ass2_resid}
\end{figure}
%%%%%%%%%%%%%%%%%%%%%%%%%%%%%%%%%%%%%%%

\newpage${}$\newpage

%%%%%%%%%%%%%%%%%%%%%%%%%%%%%%%%%%%%%%%%%%%%%%%%%%%%
\section{Detailed Calculations for Section \ref{sec:parametric}}\label{sec:detailed_calculations}
%%%%%%%%%%%%%%%%%%%%%%%%%%%%%%%%%%%%%%%%%%%%%%%%%%%%

This appendix section presents detailed calculations to derive the expressions \eqref{eq:parametric:m}--\eqref{eq:parametric:did} in Section \ref{sec:parametric}.
We omit the $i$ subscript throughout this appendix section.

Iterated applications of \eqref{eq:parametric} yield
\begin{align*}
E[Y_1|W=1,Y_{-1}] =& \alpha + \beta + \gamma + \delta_1 + \rho (\alpha + \gamma + \delta_0 + \rho Y_{-1}) 
\\
=& (1+\rho)\alpha + \beta + (1+\rho)\gamma + \delta_1 + \rho\delta_0 + \rho^2 Y_{-1}
\qquad{and}
\\
E[Y_1|W=0,Y_{-1}] =& \alpha + \delta_1 + \rho (\alpha + \delta_0 + \rho Y_{-1}) 
\\
=& (1+\rho)\alpha + \delta_1 + \rho\delta_0 + \rho^2 Y_{-1}.
\end{align*}
Substituting these expressions and using the law of iterated expectations yield
\begin{align*}
\attm &= E[ Y_1 | W=1] - E[ E[Y_1 |W=0, Y_{-1} ] | W=1] 
\\
&= E[ E[Y_1|W=1,Y_{-1}] | W=1] - E[ E[Y_1 |W=0, Y_{-1} ] | W=1] 
= \beta + (1+\rho)\gamma.
\end{align*}
This derives the expression in \eqref{eq:parametric:m}.

Next, observe that \eqref{eq:parametric} yields
\begin{align*}
Y_1-Y_0 =& \beta W + (\delta_1 - \delta_0) + \rho (Y_0-Y_{-1}) + \epsilon_1 - \epsilon_0 
\\
=& \rho\alpha + (\beta + \rho\gamma) W + \delta_1 - (1-\rho)\delta_0 - \rho(1-\rho) Y_{-1} + \epsilon_1 - (1-\rho)\epsilon_0,
\end{align*}
where the second equality follows from
$Y_0-Y_{-1} = \alpha + \gamma W + \delta_0 - (1-\rho) Y_{-1} + \epsilon_0$ by \eqref{eq:parametric}.
Thus, we have
\begin{align*}
E[Y_1-Y_0|Y_{-1},W=1] =& \rho\alpha + \beta + \rho\gamma + \delta_1 - (1-\rho)\delta_0 - \rho(1-\rho)Y_{-1}
\qquad\text{and}
\\
E[Y_1-Y_0|Y_{-1},W=0] =& \rho\alpha + \delta_1 - (1-\rho)\delta_0 - \rho(1-\rho)Y_{-1}.
\end{align*}
Substituting these expressions yields
\begin{align*}
\attdidm &= E[E[ Y_1 -Y_0 | Y_{-1}, W=1] - E[ Y_1 -Y_0 | Y_{-1}, W=0]|W=1]
= \beta + \rho\gamma.
\end{align*}
This derives the expression in \eqref{eq:parametric:didm}.

Similarly, we have
\begin{align*}
E[Y_1-Y_0|W=1] =& \rho\alpha + \beta + \rho\gamma + \delta_1 - (1-\rho)\delta_0 - \rho(1-\rho)E[Y_{-1}|W=1]
\qquad\text{and}
\\
E[Y_1-Y_0|W=0] =& \rho\alpha + \delta_1 - (1-\rho)\delta_0 - \rho(1-\rho)E[Y_{-1}|W=0].
\end{align*}
Substituting these expressions yields
\begin{align*}
\attdid &= E[ Y_1 - Y_0 | W= 1] - E[ Y_1 - Y_0 | W=0]
\\
&= \beta + \rho\gamma + \rho(1-\rho) (E[Y_{-1}|W=0] - E[Y_{-1}|W=1]).
\end{align*}
This derives the expression in \eqref{eq:parametric:did}.

\setlength{\baselineskip}{6.8mm}
\bibliographystyle{apalike}
\bibliography{reference}
%%%%%%%%%%%%%%%%%%%%%%%%%%%%%%%%%%%%%%%

\end{document}